\DeclareRobustCommand\onedot{\futurelet\@let@token\@onedot}
\def\@onedot{\ifx\@let@token.\else.\null\fi\xspace}
\def\eg{\emph{e.g}\onedot} 
\def\ie{\emph{i.e}\onedot}
\def\GG{G_\text{IntDep}}
\def\dd{d_\text{dep}}
\theoremstyle{plain} 
\newtheorem{thm}{Theorem}
\newtheorem{lem}[thm]{Lemma}
\newtheorem{defn}{Definition}
\newif\iftechrep \techrepfalse
\long\def\iftechreport#1#2{\iftechrep #1\else #2\fi}
\def\checkiftechreport#1{
\expandafter\iistechreport#1TR. \techreptrue\fi}
\def\iistechreport#1TR#2.{\def\tmp{#2}\ifx\tmp\empty\else}
\def\checkTR{\checkiftechreport{\jobname}}
\begin{document}
\title{Robustness of Interdependent Random Geometric Networks}
\author{Jianan~Zhang,~\IEEEmembership{Student~Member,~IEEE,}
        Edmund~Yeh,~\IEEEmembership{Senior~Member,~IEEE,}
        and~Eytan~Modiano,~\IEEEmembership{Fellow,~IEEE}
\thanks{Part of the material in this paper was presented at the 54th Annual Allerton Conference on Communication, Control, and Computing, 2016.

J. Zhang and E. Modiano are with the Laboratory for Information and Decision Systems, Massachusetts Institute of Technology. E. Yeh is with the Electrical and Computer Engineering Department, Northeastern University.

This work was supported by DTRA grants HDTRA1-14-1-0058, HDTRA1-13-1-0021, and NSF grant CMMI-1638234.}
}

\IEEEtitleabstractindextext{%
\begin{abstract}
We propose an interdependent random geometric
graph (RGG) model for interdependent networks. Based on this
model, we study the robustness of two interdependent spatially
embedded networks where interdependence exists between geographically
nearby nodes in the two networks. We study the
emergence of the giant mutual component in two interdependent
RGGs as node densities increase, and define the percolation threshold
as a pair of node densities above which the giant mutual
component first appears. In contrast to the case for a single RGG,
where the
percolation threshold is a unique scalar for a given connection distance, for two interdependent RGGs,
multiple pairs of percolation thresholds may exist, given that a smaller
node density in one RGG may increase the minimum node density
in the other RGG in order for a giant mutual component to
exist.  We derive analytical upper bounds on the percolation
thresholds of two interdependent RGGs by discretization, and
obtain $99\%$ confidence intervals for the percolation thresholds
by simulation. Based on these results, we derive conditions for
the interdependent RGGs to be robust under random failures
and geographical attacks.
\end{abstract}

\begin{IEEEkeywords}
Interdependent networks, percolation, random geometric graph (RGG), robustness.
\end{IEEEkeywords}}

\maketitle

\IEEEdisplaynontitleabstractindextext

%
\IEEEpeerreviewmaketitle

\IEEEraisesectionheading{\section{Introduction}\label{sec:introduction}}
Cyber-physical systems such as smart power grids and smart transportation networks are being deployed towards the design of smart cities. The integration of communication networks and physical networks facilitates network operation and control. In these integrated networks, one network depends on another for information, power, or other supplies in order to properly operate, leading to interdependence. For example, in smart grids, communication networks rely on the electric power from power grids, and simultaneously control power generators \cite{rosato2008modelling, parandehgheibi2015modeling}. Failures in one network may cascade to another network, which potentially make the interdependent networks vulnerable. 

Cascading failures in interdependent networks have been extensively studied in the statistical physics literature since the seminal work in \cite{buldyrev2010catastrophic}, where each of the two interdependent networks is modeled as a random graph. A node is functional if both itself and its interdependent node are in the giant components of their respective random graphs. After initial node failures in the first graph, their interdependent nodes in the second graph fail. Thus, a connected component in the second graph may become disconnected, and the failures of the disconnected nodes cascade back to (their interdependent) nodes in the first graph. As a result of the cascading failures, removing a small fraction of nodes in the first random graph destroys the giant components of both graphs.

To model spatially embedded networks, an interdependent lattice model was studied in \cite{bashan2013extreme}. Under this model, geographical attacks may cause significantly more severe cascading failures than random attacks. Removing nodes in a finite region (\ie, a zero fraction of nodes) may destroy the infinite clusters in both lattices \cite{berezin2015localized}.

If every node in one network is interdependent with multiple nodes in the other network, and a node is content to have at least one interdependent node, failures are less likely to cascade \cite{PhysRevEShao, yaugan2012optimal}. Although the one-to-multiple interdependence exists in real-world spatially embedded interdependent networks (\eg, a control center can be supported by the electric power generated by more than one power generator), it has not been previously studied using spatial graph models.

We use a random geometric graph (RGG) to model each of the two interdependent networks. \textcolor{black}{RGG has been widely used to model communication networks \cite{franceschetti2008random}. For example, in a wireless network where the communication distance is limited by the signal to noise ratio requirement, under fixed transmission power, two users can communicate if and only if they are within a given distance. Percolation theory for RGG has been applied to study information flow in wireless networks and the robustness of networks under failures \cite{Tse,kong2010}. In this paper, we extend percolation theory to interdependent RGGs.} 

The two RGGs representing two interdependent networks are allowed to have different connection distances and node densities, which can represent two networks that have different average link lengths and scales. These network properties were not captured by the lattice model in the previous literature. Moreover, the interdependent RGG model is able to capture the one-to-multiple interdependence in spatially embedded networks, and provides a more versatile framework for studying interdependent networks.

Robustness is a key design objective for interdependent networks. We study the conditions under which a positive fraction of nodes are functional in interdependent RGGs as the number of nodes approaches infinity. In this case, the interdependent RGGs {\it percolate}. Consistent with previous research \cite{buldyrev2010catastrophic, bashan2013extreme, PhysRevEShao}, the robustness of interdependent RGGs under failures is measured by whether percolation exists after failures. To the best of our knowledge, our paper is the first to study the percolation of interdependent spatial network models using a mathematically rigorous approach. 

The main contributions of this paper are as follows.
\begin{enumerate}
\item We propose an interdependent RGG model for two interdependent networks, which captures the differences in the scales of the two networks as well as the one-to-multiple interdependence in spatially embedded networks.
\item We derive the first analytical upper bounds on the percolation thresholds of the interdependent RGGs, above which a positive fraction of nodes are functional.
\item We obtain $99\%$ confidence intervals for the percolation thresholds, by mapping the percolation of interdependent RGGs to the percolation of a square lattice where the probability that a bond in the square lattice is open is evaluated by simulation.
\item We characterize sufficient conditions for the interdependent RGGs to percolate under random failures and geographical attacks. In particular, if the node densities are above any upper bound on the percolation threshold obtained in this paper, the interdependent RGGs remain percolated after a geographical attack. This is in contrast with the cascading failures after a geographical attack, observed in the interdependent lattice model with one-to-one interdependence \cite{berezin2015localized}.
\item We extend our techniques to study models with more general interdependence requirement (\eg, a node in one network requires more than one supply node from the other network).
\end{enumerate}

The rest of the paper is organized as follows. We state the model and preliminaries in Section \ref{sc:model}. We derive analytical upper bounds on percolation thresholds in Section \ref{sc:upperbound}, and obtain confidence intervals for percolation thresholds in Section \ref{sc:interval}. In Section \ref{sc:robustness}, we study the robustness of interdependent RGGs under random failures and geographical attacks. In Section \ref{sc:extension}, we extend the techniques to study graphs with more general interdependence. Section \ref{sc:conclusion} concludes the paper.

\section{Model}
\label{sc:model}
\subsection{Preliminaries on RGG and percolation}
An RGG in a two-dimensional square consists of nodes generated by a Poisson point process and links connecting nodes within a given connection distance \cite{penrose2003}. Let $G(\lambda, d, a^2)$ denote an RGG with node density $\lambda$ and connection distance $d$ in an $a \times a$ square. The studies on RGG focus on the regime where the expected number of nodes $n = \lambda a^2$ is large. We first present some preliminaries which are useful for developing our model. The {\it giant component} of an RGG is a connected component that contains $\Theta(n)$ nodes. A node belongs to the giant component with a positive probability $\Theta(n)/n$ if the giant component exists. For a given connection distance, the {\it percolation threshold} is a node density above which a node belongs to the giant component with a positive probability (\ie, a giant component exists) and below which the probability is zero (\ie, no giant component exists). By scaling, if the percolation threshold is $\lambda^*$ under connection distance $d$, then the percolation threshold is $\lambda^* c^2$ under connection distance $d/c$. Therefore, without loss of generality, in this paper, we study the percolation thresholds represented by node densities, for given connection distances.

The RGG is closely related to the \emph{Poisson boolean model} \cite{meester1996continuum}, where nodes are generated by a Poisson point process on an {\it infinite plane}. Let $G(\lambda, d)$ denote a Poisson boolean model with node density $\lambda$ and connection distance $d$. The difference between $G(\lambda, d)$ and $G(\lambda, d, a^2)$ is that the number of nodes in $G(\lambda, d)$ is infinite while the expected number of nodes in $G(\lambda, d, a^2)$ is large but finite. The Poisson boolean model can be viewed as a limit of the RGG as the number of nodes approaches infinity. The percolation threshold of $G(\lambda, d)$ under a given $d$ is defined as the node density above which a node belongs to the {\it infinite component} with a positive probability and below which the probability is zero. It has been shown that a node belongs to the infinite component with a positive probability if and only if an infinite component exists, and thus the percolation of $G(\lambda, d)$ can be equivalently defined as the existence of the infinite component \cite{meester1996continuum}. Moreover, the percolation threshold of $G(\lambda, d)$ is identical with the percolation threshold of $G(\lambda, d, a^2)$ \cite{penrose2003, balister2008percolation}.

\subsection{Interdependent RGGs}
Two interdependent networks are modeled by two RGGs $G_1(\lambda_1, d_1, a^2)$ and $G_2(\lambda_2, d_2, a^2)$ on the \emph{same} $a \times a$ square. A node in one graph is interdependent with {\it all} the nodes in the other graph within the \emph{interdependent distance} $d_{\text{dep}}$. See Fig. \ref{fig:IntDepRGG} for an illustration.
Nodes in one graph are {\it supply nodes} for nodes in the other graph within $\dd$. The physical interpretation of supply can be either electric power or information that is essential for proper operation. A node can receive supply from nearby nodes within the interdependent distance. Larger interdependent distance leads to more robust interdependent networks. The geographical nature of interdependence is observed in physical networks \cite{rosato2008modelling, bashan2013extreme}.  
\begin{figure}[h]
\begin{centering}
\leavevmode\includegraphics[width=0.7\linewidth]{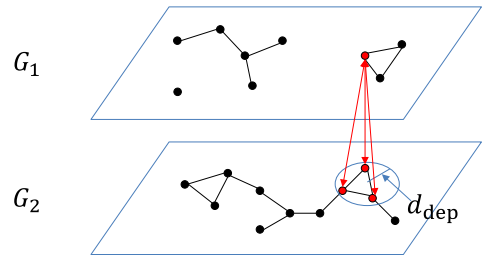}
\caption{Two interdependent RGGs with interdependent distance $\dd$.}
\label{fig:IntDepRGG}
\end{centering}
\end{figure}

Most analysis in this paper is given in the context of two interdependent Poisson boolean models $\GG = (G_1(\lambda_1, d_1), G_2(\lambda_2, d_2), \dd)$, which is the limit of two interdependent RGGs as the numbers of nodes in both graphs approach infinity.

We define a mutual component and an infinite mutual component in $\GG$, in the same way as one defines a connected component and an infinite component in $G(\lambda,d)$.

\begin{defn}
Let $V^0_i$ denote nodes in a connected component in $G_i(\lambda_i, d_i)$, $\forall i \in \{1,2\}$. If each node in $V_i \subseteq V^0_i$ has at least one supply node in $V_j \subseteq V^0_j$ within $d_{\text{dep}}$, $\forall i, j \in \{1,2\}, i \neq j$, then nodes $V_1$ and $V_2$ form a \emph{mutual component} of $\GG$.

If, in addition, $V_i$ contains an infinite number of nodes, $\forall i\in \{1,2\}$, then $V_1$ and $V_2$ form an \emph{infinite mutual component}.
\end{defn}

A mutual component can be viewed as an autonomous system in the sense that nodes in a mutual component have supply nodes in the same mutual component, and in each graph, nodes that belong to a mutual component are connected regardless of the existence of nodes outside the mutual component. \textcolor{black}{Note that a node can receive supply from any of its supply nodes in the same mutual component, and thus is content if it has at least one supply node}. Nodes in an infinite mutual component are \emph{functional}, since they constitute two large connected interdependent networks and can perform a given network function (\eg, data communication or power transmission to a large number of clients). This definition of functional is consistent with previous research on interdependent networks based on random graph models~\cite{buldyrev2010catastrophic}.

For a fixed $\dd$, if an infinite mutual component exists in $\GG = (G_1(\lambda_1, d_1), G_2(\lambda_2, d_2), \dd)$, then an infinite mutual component exists in $\GG' = (G_1(\lambda'_1, d_1), G_2(\lambda_2, d_2), \dd)$, where $\lambda'_1 > \lambda_1$. This can be explained by coupling $G'_1$ with $G_1$ as follows. By removing each node in $G'_1$ independently with probability $1 - \lambda_1/\lambda'_1$, the density of the remaining nodes in $G'_1$ is $\lambda_1$, and an infinite mutual component exists in the interdependent graphs that consist of $G_2$ and the graph formed by the remaining nodes in $G'_1$. Since adding nodes to a graph does not disconnect any mutual component, an infinite mutual component exists in $\GG' = (G_1(\lambda'_1, d_1), G_2(\lambda_2, d_2), \dd)$. By the same analysis, an infinite mutual component also exists in $\GG'' = (G_1(\lambda_1, d_1), G_2(\lambda'_2, d_2), \dd)$, if $\lambda'_2 > \lambda_2$.

We define a percolation threshold of $\GG$ as follows.
\begin{defn}
A pair of node densities $(\lambda_1^*, \lambda_2^*)$ is a \emph{percolation threshold} of $\GG$, given connection distances $d_1, d_2$ and the interdependent distance $\dd$, if an infinite mutual component exists in $\GG$ for $\lambda_1 > \lambda_1^*$ and $\lambda_2 > \lambda_2^*$, and no infinite mutual component exists otherwise.
\end{defn}

For fixed $d_1$, $d_2$ and $\dd$, there may exist multiple percolation thresholds. We show that, in most cases, the larger the node density is in one graph, the smaller the required node density is in the other graph in order for the infinite mutual component to exist. This is in contrast with the situation for a single graph $G(\lambda,d)$ where there is a unique percolation threshold $\lambda^*$ for a fixed $d$.

There is a non-trivial phase transition in $\GG$. If $\lambda_i$ is smaller than the percolation threshold of a single graph $G_i(\lambda_i, d_i)$, there is no infinite component in $G_i(\lambda_i, d_i)$, and therefore there is no infinite mutual component in $\GG$. Thus, $\lambda_i^* > 0$, $\forall i \in \{1,2\}$. As we will see in the next section, there exist percolation thresholds $\lambda_i^* < \infty$, $\forall i \in \{1,2\}$, which concludes the non-trivial phase transition.

Given that the conditions for the percolation of a random geometric graph $G_i(\lambda_i, d_i, a^2)$ and a Poisson boolean model $G_i(\lambda_i, d_i)$ are the same, the above definitions can be naturally extended to interdependent RGGs. Consider nodes $V_1 \subseteq G_1(\lambda_1, d_1, a^2)$ and $V_2 \subseteq G_2(\lambda_2, d_2, a^2)$ that form a mutual component. If $V_i$ contains $\Theta(n_i)$ nodes, where $n_i = \lambda_i a^2$, $\forall i\in \{1,2\}$, then $V_1$ and $V_2$ form a \emph{giant mutual component} in interdependent RGGs. The percolation of interdependent RGGs is defined as the existence of a giant mutual component. In the rest of the paper, we sometimes use $G_i$ to denote both $G_i(\lambda_i, d_i, a^2)$ and $G_i(\lambda_i, d_i)$. The model that it refers to will be clear from the context.

\color{black}
\subsection{Related work}
In the interdependent networks literature, the model which is closest to ours is the interdependent lattice model, first proposed in \cite{li2012cascading} and further studied in \cite{bashan2013extreme, berezin2015localized}. In the lattice model, nodes in a network are represented by the open {\it sites} (nodes) of a square lattice, where every site is open independently with probability $p$. Network links are represented by the {\it bonds} (edges) between adjacent open sites. Every node in one lattice is interdependent with {\it one} randomly chosen node within distance $r_d$ in the other lattice. The distance $r_d$ indicates the geographical proximity of the interdependence. The percolation threshold of the interdependent lattice model is characterized as a function of $r_d$, assuming the same $p$ in both lattices \cite{li2012cascading}. Percolation of the model where some nodes do not need to have supply nodes was studied in \cite{bashan2013extreme}. The analysis relies on quantities estimated by simulation and extrapolation, such as the fraction of nodes in the infinite component of a lattice for any fixed $p$, which cannot be computed rigorously. In contrast, we study the percolation of the interdependent RGG model using a mathematically rigorous approach.

The percolation of a single RGG (or a Poisson boolean model) has been studied in the previous literature \cite{hall1985continuum, meester1996continuum, balister2005}. The techniques employed therein involves inferring the percolation of the continuous model from the percolation of a discrete lattice model. The key is obtaining a lattice whose percolation condition is known and is related to the percolation of the original model, by discretization. The study of the percolation conditions of discrete lattice models can be found in \cite{Grimmett1999, bollobas2006percolation}. 
We extend the previous techniques to discretize $\GG$, and obtain bounds on the percolation thresholds.

\section{Analytical upper bounds on percolation thresholds}
\label{sc:upperbound}
\textcolor{black}{In this section, we study sufficient conditions for the percolation of $\GG$. We provide closed-form formulas for $(\lambda_1, \lambda_2)$, which depend on $d_1,d_2,\dd$, such that there exists an infinite mutual component in $\GG = (G_1(\lambda_1,d_1),G_2(\lambda_2,d_2),\dd)$. The formulas provide guidelines for node densities in deploying physical interdependent networks, in order for a large number of nodes to be connected.}

In $\GG$, nodes in the infinite mutual component are viewed as functional while all the other nodes are not. Thus, a node is functional only if it is in the infinite component of its own graph, and it depends on at least one node in the infinite component of the other graph. For any node $b_1$ in $G_1$, although the number of nodes in $G_2$ within the interdependent distance from $b_1$ follows a Poisson distribution, the number of functional nodes is hard to calculate, since the probability that a node in $G_2$ is in the infinite component is unknown. Moreover, the nodes in the infinite component of $G_2$ are clustered, and thus the thinning of the nodes in $G_1$ due to a lack of supply nodes in $G_2$ is inhomogeneous. To overcome these difficulties, we consider the percolation of two graphs jointly, instead of studying the percolation of one graph with reduced node density due to a lack of supply nodes.

We now give an overview of our approach. We develop mapping techniques (discretizations) to characterize the percolation of $\GG$ by the percolation of a discrete model. Mappings from a model whose percolation threshold is unknown to a model with known percolation threshold are commonly employed in the study of continuum percolation. For example, one can study the percolation threshold of the Poisson boolean model $G(\lambda, d)$ by mapping it to a triangle lattice and relating the state of a site in the triangle lattice to the point process of $G(\lambda,d)$. By the mapping, the percolation of the triangle lattice implies the percolation of $G(\lambda, d)$. Consequently, an upper bound on the percolation threshold of $G(\lambda, d)$ is given by $\lambda$ for which the triangle lattice percolates, a known quantity \cite{hall1985continuum, meester1996continuum}. In general, more than one mapping can be applied, and the key is to find a mapping that gives a good (smaller) upper bound. Following this idea, we propose different mappings that fit different conditions to obtain upper bounds on the percolation thresholds of $\GG$.

\iftechreport{
In the rest of this section, we first study an example, in which the connection distances of the two graphs are the same, to understand the tradeoff between the two node densities in order for $\GG$ to percolate.
We then develop two upper bounds on the percolation thresholds. The first bound is tighter when the ratio of the two connection distances is small, and is obtained by mapping $\GG$ to a square lattice with independent bond open probabilities. The second bound is tighter when the ratio of the two connection distances is large, and is obtained by mapping $\GG$ to a square lattice with correlated bond open probabilities.
}{}
\iftechreport{
\subsection{A motivating example}
\label{sc:example}
To see the impact of varying the node density in one graph on the minimum node density in the other graph in order for $\GG$ to percolate, consider an example where $d_1 = d_2 = 2d_{\text{dep}}$. We apply a mapping similar to what is used to obtain an upper bound on the percolation threshold of $G(\lambda,d)$ in \cite{hall1985continuum}, to obtain upper bounds on the percolation thresholds of $\GG$.

Consider a triangle lattice where each site is surrounded by a cell. The lattice bond length is determined such that any two points in adjacent cells have distance smaller than $2r$, where $2r = d_1$. The boundary of the cell consists of arcs of radius $r$ centered at the middle of the bonds in the triangle lattice. See Fig.~\ref{lattice} for an illustration. The area of the cell is $A=0.8227 r^2$. A site in the triangle lattice is either {\it open} or {\it closed}. If the probability that a site is open is strictly larger than $1/2$, open sites form an infinite component, and the triangle lattice percolates \cite{hall1985continuum}.
\begin{figure}[h]
\begin{centering}
\leavevmode\includegraphics[width=0.55\linewidth]{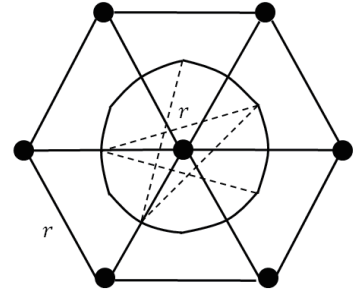}
\caption{A cell that contains a site in a triangle lattice.}
\label{lattice}
\end{centering}
\end{figure}

To study the percolation of $\GG$, we declare a site in the triangle lattice to be open if there is at least one node in its cell from $G_1$ and at least one node in its cell from $G_2$. If the triangle lattice percolates, then $\GG$ also percolates. To see this, consider two adjacent open sites in the triangle lattice. Nodes from $G_i$ in the two adjacent cells that contain the two open sites are connected, because they are within distance $d_i = 2r$ ($\forall i \in \{1,2\}$). If the open sites in the triangle lattice form an infinite component, then nodes from $G_i$ in the corresponding cells form an infinite component $V_i$ ($\forall i \in \{1,2\}$). Moreover, given that any pair of nodes in a cell are within distance $r \leq \dd$, each node in $V_i$ has at least one supply node in $V_j$ within the same cell ($\forall i,j \in \{1,2\}, i \neq j$).

Since $1 - e^{-\lambda_i A}$ is the probability that there is at least one node in the cell from $G_i$ and the point processes in $G_1$ and $G_2$ are independent, an upper bound on the percolation thresholds of $\GG$ is given by $(\lambda_1, \lambda_2)$ satisfying
$$ (1 - e^{-\lambda_1 A})(1 - e^{-\lambda_2 A}) = 1/2. $$

If $\lambda_i$ is large, the percolation threshold $\lambda^*_j$ approaches the threshold of a single graph $G_j$. Intuitively, if $\lambda_j$ is above the percolation threshold of $G_j$, disks of radius $d_j / 2$ centered at nodes in $G_j$ form a connected infinite-size region. Since $\lambda_i$ is large, nodes in $G_i$ in this region are connected and form an infinite component. Moreover, since $d_\text{dep} = d_j / 2$, all the nodes in this region have supply nodes, and they form an infinite mutual component.


The above upper bounds on percolation thresholds are still valid if $d_\text{dep} > d_i / 2$, because each node can depend on a larger set of nodes by increasing $d_\text{dep}$ and it is easier for $\GG$ to percolate under the same node densities and connection distances. However, if $d_\text{dep} < d_i / 2$, the bond length of the triangle lattice should be adjusted to $r = d_\text{dep}$ in order for any pair of nodes in a cell to be within $d_\text{dep}$. The percolation threshold curve $(\lambda_1, \lambda_2)$ would shift upward. Intuitively, if $d_\text{dep}$ decreases, the node density in one network should increase to provide enough supply for the other network.
}{}
\color{black}
\subsection{Small ratio $d_2/d_1$}
Given $\GG = (G_1(\lambda_1,d_1), G_2(\lambda_2,d_2), \dd)$, without loss of generality we assume that $d_1 \leq d_2$. Moreover, we assume that $d_\text{dep} \geq \text{max}(d_1/2, d_2/2) = d_2/2$ (see the remark at the end of the section for comments on this assumption).
Let $c = \lfloor d_2/d_1 \rfloor = \max\{c: d_2/d_1 \geq c, c \in \mathbb{N}\}$. For small $c$, we study the percolation of $\GG$ by mapping it to an independent bond percolation of a square lattice, and prove the following result.

\begin{thm}
\label{th:uppersmall}
If $(\lambda_1,\lambda_2)$ satisfies
$$ (1 - e^{- \lambda_1 d_1^2 / 8})^c (1 - e^{- \lambda_2 c^2 d_1^2 / 8}) > 1/2,$$
then $\GG = (G_1(\lambda_1, d_1), G_2(\lambda_2, d_2), \dd)$ percolates, where $c = \lfloor d_2/d_1 \rfloor$, $d_1 \leq d_2$, and $\dd \geq d_2/2$.
\end{thm}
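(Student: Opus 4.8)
The plan is to reduce the percolation of the continuum model $\GG$ to bond percolation on the square lattice $\mathbb{Z}^2$, whose critical probability is exactly $1/2$ (Kesten); the quantity $(1-e^{-\lambda_1 d_1^2/8})^c(1-e^{-\lambda_2 c^2 d_1^2/8})$ will be the probability that a bond is open, so the hypothesis says the bond model is supercritical and therefore has an infinite open cluster almost surely. I would build the discretization from two grids of axis-parallel squares: \emph{small squares} of side $d_1/(2\sqrt2)$, hence of area $d_1^2/8$ and diameter $d_1/2$, and \emph{big squares}, each the union of a $c\times c$ block of small squares, hence of side $cd_1/(2\sqrt2)$, area $c^2 d_1^2/8$, and diameter $cd_1/2$. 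To each bond $e$ of a suitably scaled and positioned copy of $\mathbb{Z}^2$ I associate one big square $Q_e$ together with a distinguished row of $c$ small squares contained in $Q_e$, doing this so that the big squares $\{Q_e\}$ are pairwise disjoint while for any two bonds $e,e'$ sharing a lattice vertex the squares $Q_e,Q_{e'}$ (and their distinguished rows) are placed adjacently, so that the connectivity fragments described below chain together. I declare $e$ \emph{open} if every one of the $c$ small squares in its distinguished row contains a point of $G_1$ and $Q_e$ contains a point of $G_2$. Since $G_1$ and $G_2$ are independent Poisson processes, the $c$ small squares are disjoint, and the big squares of distinct bonds are disjoint, these bond events are independent and each has probability exactly $(1-e^{-\lambda_1 d_1^2/8})^c(1-e^{-\lambda_2 c^2 d_1^2/8})>1/2$.

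Next I would record the three elementary geometric facts that make an open cluster yield a mutual component. First, two points in the same small square or in two edge-adjacent small squares are at distance at most $d_1\sqrt{5/8}<d_1$; hence the $c$ points of $G_1$ in a distinguished row form a connected path in $G_1$ crossing $Q_e$, and for two bonds sharing a vertex that are both open these paths are linked in $G_1$. Second, any two points of a big square are within $cd_1/2\le d_2/2<d_2$, and points of $G_2$ in the big squares of two open bonds sharing a vertex are within $d_2$; hence such points are connected in $G_2$. Third, since $\dd\ge d_2/2\ge cd_1/2$ is at least the diameter of a big square, any point of $G_1$ and any point of $G_2$ lying in a common big square are within $\dd$; in particular every $G_1$-point of the distinguished row of $e$ has a supply node in $Q_e$ from $G_2$, and vice versa.

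To conclude, I would argue: since the bond-open probability exceeds $p_c^{\mathrm{bond}}(\mathbb{Z}^2)=1/2$, almost surely there is an infinite cluster $\mathcal C$ of open bonds; let $V_1$ be the set of $G_1$-points in the distinguished rows of bonds of $\mathcal C$ and $V_2$ the set of $G_2$-points in the big squares of bonds of $\mathcal C$. By the first two facts, $V_1$ lies in a single (necessarily infinite) connected component of $G_1$ and $V_2$ in a single infinite connected component of $G_2$; by the third fact, every node of $V_1$ has a supply node in $V_2$ and every node of $V_2$ has a supply node in $V_1$, so $(V_1,V_2)$ contains an infinite mutual component and $\GG$ percolates. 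The main obstacle is the geometric bookkeeping in the construction of the $Q_e$ and their distinguished rows: one needs the big squares of distinct bonds to be \emph{disjoint} (so that Kesten's $p_c=1/2$ for \emph{independent} bond percolation applies) and \emph{simultaneously} the fragments attached to bonds sharing a vertex to abut closely enough to chain into one infinite $G_1$-path, one infinite $G_2$-path, and a consistent supply relation between them; these two requirements pull in opposite directions, and reconciling them — together with verifying the borderline distance inequalities under the standing assumption $\dd\ge d_2/2$ — is where the real work lies. The Poisson occupancy probabilities and the distance estimates themselves are routine.
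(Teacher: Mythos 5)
Your proposal is correct and follows essentially the same route as the paper: small squares of side $d_1/(2\sqrt2)$ for $G_1$-connectivity, $c\times c$ blocks for $G_2$-occupancy and the supply condition, independence from disjointness of the blocks, and Kesten's $p_c=1/2$ for the square lattice. The one piece you flag as "where the real work lies" --- placing disjoint big squares on bonds while making the fragments of adjacent bonds chain --- is resolved in the paper by a specific device: the bonds of the lattice $L$ are taken to be the \emph{diagonals} of the big squares themselves, so each bond owns exactly one big square of the tiling (hence disjointness and independence), the distinguished "row" is the run of $c$ small squares the diagonal crosses (consecutive ones are corner-adjacent, with worst-case point distance $2\sqrt2\, s = d_1$), and two bonds sharing a lattice vertex meet at a common corner of their big squares, where the corner small squares abut and the two $G_1$-paths join; the worst case for $G_2$ is corner-sharing big squares, with point distance at most $2\sqrt2\, cs = cd_1 \le d_2$. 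A minor difference: the paper defines a bond to be open via the event that the two end small squares are occupied and connected through $G_1$-nodes \emph{within} the large square, and only lower-bounds that probability by the all-$c$-small-squares-occupied event you use as the definition --- both give the stated inequality, but the paper's formulation makes the later refinement (Eq.~(\ref{eq:newp})) possible.
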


\textcolor{black}{Theorem \ref{th:uppersmall} provides a sufficient condition for the percolation of $\GG$. For node densities that satisfy the inequality, an infinite mutual component exists in $\GG$. For the deployment of interdependent networks, if the node densities in the two networks are sufficiently large (characterized by Theorem \ref{th:uppersmall}), then a large number of nodes in the interdependent networks are functional.}

\begin{proof}[Proof of Theorem \ref{th:uppersmall}]
We first construct a square lattice as follows. Partition the plane into small squares of side length $s = d_1/2\sqrt{2}$. A large square consists of $c \times c$ small squares and has side length $cs$. The {\it diagonals} of the large squares form the bonds of a square lattice $L$, illustrated by the thick line segments in Fig.~\ref{squarelattice}.

The state of a bond in $L$ is determined by the point process of $\GG$ in the large square that contains the bond.
A bond $(v_1,v_2)$ is open if the following conditions are both satisfied.
\begin{enumerate}
\item There is at least one node from $G_1$ in each of the two small squares that contain the ends ($v_1$ and $v_2$) of the bond, and they are connected through nodes from $G_1$, \emph{all within the large square of side length~$cs$}.
\item There is at least one node from $G_2$ in the large square that contains the bond.
\end{enumerate}
\begin{figure}[h]
\begin{centering}
\leavevmode\includegraphics[width=0.9\linewidth]{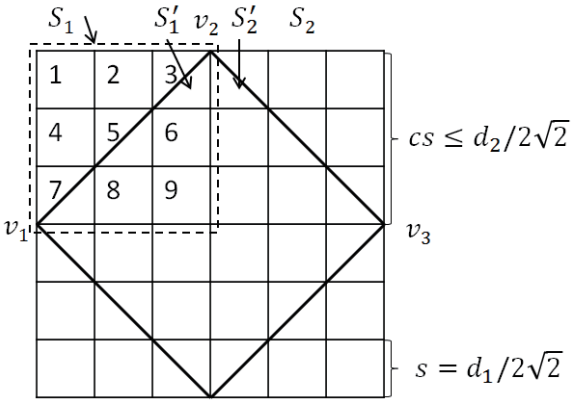}
\caption{Mapping to a square lattice for $c = 3$.}
\label{squarelattice}
\end{centering}
\end{figure}



The first condition is satisfied if there exists a sequence of adjacent small squares, each of which contains at least one node in $G_1$, from the small square that contains $v_1$ to the small square that contains $v_2$. (Each small square is \emph{adjacent} to its eight immediate neighbors.) In the example of Fig. \ref{squarelattice}, these sequences include 3-5-7, 3-2-4-7, and 3-6-8-7.

To obtain a closed-form formula, instead of computing the exact probability, we compute a lower bound on the probability that the first condition is satisfied. The probability is lower bounded by the probability that the $c$ small squares that intersect the bond each contain at least one node from $G_1$, given by
$$ p_1 \geq (1 - e^{- \lambda_1 d_1^2 / 8})^c. $$

The probability that the second condition is satisfied is
$$ p_2 = 1 - e^{- \lambda_2 c^2 d_1^2 / 8}. $$

Given that the two Poisson point processes in $G_1$ and $G_2$ are independent, the probability that a bond is open is $p_1p_2$.

It remains to prove that the percolation of $L$ implies the percolation of $\GG$. Consider two adjacent open bonds $(v_1,v_2),(v_2,v_3)$ in $L$. Let $S_1$ and $S_2$ denote the two adjacent large squares of side length $cs$ that contain the two open bonds. Let $S'_1$ and $S'_2$ denote two adjacent small squares of side length $s$ that contains $v_2$, within $S_1$ and $S_2$, respectively. See Fig.~\ref{squarelattice} for an illustration. Since $(v_1,v_2),(v_2,v_3)$ are open, under the second condition, nodes of $G_2$ exist in $S_1$ and $S_2$ and they are connected, because they are within distance $2 \sqrt 2 cs \leq d_2$. Under the first condition, nodes of $G_1$ form a connected path from the small square (within $S_1$, marked as 7 in Fig.~\ref{squarelattice}) containing $v_1$ to $S'_1$, and another path from the small square (within $S_2$) containing $v_3$ to $S'_2$. Moreover, the two paths are joined, because any pair of nodes in $S'_1$ and $S'_2$ are within distance $2\sqrt 2 s = d_1$. Given that any pair of nodes within a large square have distance at most $\sqrt2 c s \leq d_2/2 \leq d_\text{dep}$, all the nodes have at least one supply node inside the large square that contains an open bond. To conclude, if the open bonds in $L$ form an infinite component, then the nodes in $\GG$ form an infinite mutual component.

The event that a bond is open depends on the point processes in the large square that contains the bond, and is independent of whether any other bonds are open. As long as the probability that a bond is open, $p_1p_2$, is larger than $1/2$, which is the threshold for independent bond percolation in a square lattice \cite{bollobas2006percolation}, $\GG$ percolates. 
\end{proof}

The bound can be made tighter for any given $c = \lfloor d_2/d_1 \rfloor$, by computing more precisely the probability that the first condition is satisfied. We provide an example to illustrate the computation of an improved upper bound.

{\it Example:} Consider an example where $d_1 = 1, d_2 = 2d_\text{dep} = 3$. The probability that there is at least one node from $G_2$ in the large square of side length $3/2\sqrt2$ is
$ p_2 = 1 - e^{-9\lambda_2/8}.$

The probability that a small square of side length $1/2\sqrt2$ contains at least one node from $G_1$ is $ p_s = 1 - e^{- \lambda_1/8}.$
The probability that the first condition is satisfied is
\begin{equation}
p_1 = p_s^3 + (1-p_s)p_s^4 + (1-p_s)p_s^4 - (1 - p_s) p_s^6,
\label{eq:newp}
\end{equation}
obtained by considering all the sequences of adjacent small squares. For node densities $(\lambda_1,\lambda_2)$ that satisfy $p_1p_2>1/2$, $\GG$ percolates. Since $p_1$ computed by Eq. (\ref{eq:newp}) is larger than $p_s^3$ for any fixed $p_s$, the bound on $\lambda_2$ is smaller for any fixed $\lambda_1$.

\subsection{Large ratio $d_2/d_1$}
\label{sc:onedeplarge}
In the mapping from $\GG$ to the square lattice $L$, the condition for a bond to be open becomes overly restrictive as $d_2/d_1$ increases. A path crossing the two large squares that contain two adjacent bonds does not have to cross the small squares that contain the common end of the two bonds.
\textcolor{black}{In the following theorem, we give another upper bound on the percolation threshold of $\GG$. This result provides an alternative sufficient condition for the existence of an infinite mutual component in $\GG$. This upper bound is tighter than the bound in Theorem \ref{th:uppersmall} for larger values of $d_2/d_1$.}

\begin{thm}
If $(\lambda_1, \lambda_2)$ satisfies
{\small $$ \Big[1 - \frac{4}{3}(m+1) e^{m \log 3(1-p)}\Big]\Big[1 - \frac{4}{3}(2m+1) e^{m \log 3(1-p)}\Big] p' > 0.8639,$$}
then $\GG = (G_1(\lambda_1, d_1), G_2(\lambda_2, d_2), \dd)$ percolates, where $p = 1 - e^{-\lambda_1 d_1^2 / 8}$, $p' = 1 - e^{-2D^2 \lambda_2}$, $D = \min (d_2 / \sqrt{10}, d_\text{dep} / \sqrt 5), m = \lfloor 2D/d_1 \rfloor$, $d_1 \leq d_2$, and $\dd \geq d_2/2$.
\label{th:upperlarge}
\end{thm}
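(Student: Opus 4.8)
The plan is to use the renormalization (discretization) idea from the proof of Theorem~\ref{th:uppersmall}, but to let a bond of the auxiliary lattice record an \emph{occupied crossing} for $G_1$ instead of the much more demanding event ``a $G_1$-path runs through two prescribed small squares,'' which becomes wasteful once $d_2/d_1$ is large. The price is that bonds sharing a vertex become dependent, which is why the relevant threshold is the $1$-independent value $0.8639$ rather than the independent value $1/2$ of Theorem~\ref{th:uppersmall}. Concretely, I would work at two scales. At the coarse ($G_2$) scale, tile the plane by squares $Q_z$, $z\in\mathbb{Z}^2$, of side $D=\min(d_2/\sqrt{10},\dd/\sqrt5)$, and take $L\cong\mathbb{Z}^2$ to be the lattice on the centres of the $Q_z$, with bonds joining centres of edge-adjacent coarse squares; to a bond $e=\{z,z'\}$ attach the $2D\times D$ rectangle $R(e)=Q_z\cup Q_{z'}$, of area $2D^2$. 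At the fine ($G_1$) scale, discretize into cells at the same resolution used for a single RGG, each declared \emph{occupied} with probability exactly $p=1-e^{-\lambda_1 d_1^2/8}$ independently over cells, and in such a way that points of $G_1$ in adjacent occupied cells lie within distance $d_1$ and are therefore joined in $G_1$; here $m=\lfloor 2D/d_1\rfloor$ plays the role of the number of cells a blocking path is forced to meet when crossing $R(e)$. Declare $e$ \emph{open} if both: (A) the occupied cells inside $R(e)$ realise occupied crossings of $R(e)$ in both coordinate directions, chosen so that at every vertex of $L$ the crossings carried by the incident open bonds necessarily intersect; and (B) $R(e)$ contains at least one point of $G_2$.

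The first point is that percolation of $L$ forces percolation of $\GG$. Along an open path in $L$, consecutive bonds share a coarse square, and the arrangement of the crossings in (A) forces the two incident long crossings to meet there, so the $G_1$-crossings splice into a single infinite connected set $V_1\subseteq G_1$. The points of $G_2$ witnessing (B) on two consecutive bonds lie in at most three consecutive coarse squares, hence within $\sqrt{(3D)^2+D^2}=D\sqrt{10}\le d_2$, and so are joined in $G_2$, yielding an infinite connected $V_2\subseteq G_2$. Finally each point of $V_1$ lies in a coarse square that is an endpoint of some open bond $e$, and $R(e)$ then carries a point of $V_2$ within $\sqrt{(2D)^2+D^2}=D\sqrt5\le\dd$ of it, and symmetrically each point of $V_2$ has a supply point in $V_1$ within $D\sqrt5\le\dd$; hence $(V_1,V_2)$ is an infinite mutual component. (The inequalities $D\sqrt{10}\le d_2$ and $D\sqrt5\le\dd$ are exactly what forces $D=\min(d_2/\sqrt{10},\dd/\sqrt5)$.)

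It remains to bound $P(e\text{ open})$ from below. The two Poisson processes being independent, $P(e\text{ open})=P(\mathrm A)\,P(\mathrm B)$ with $P(\mathrm B)=1-e^{-2D^2\lambda_2}=p'$. Since (A) is a conjunction of two increasing occupied-crossing events, Harris's inequality gives $P(\mathrm A)$ at least the product of their probabilities, and each is estimated by the standard planar-duality plus Peierls argument for a single RGG: absence of an occupied crossing forces a blocking path of \emph{vacant} cells, of length at least $m$; bounding the number of self-avoiding dual paths of a given length (by a factor of order $3$ per step) and the number of admissible start cells (at most $m+1$ in one direction, $2m+1$ in the other) bounds the two failure probabilities by $\tfrac43(m+1)e^{m\log 3(1-p)}$ and $\tfrac43(2m+1)e^{m\log 3(1-p)}$. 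Thus $P(\mathrm A)\ge[1-\tfrac43(m+1)e^{m\log 3(1-p)}][1-\tfrac43(2m+1)e^{m\log 3(1-p)}]$, and multiplying by $p'$ gives the left-hand side of the theorem. Since the state of $e$ depends only on the two processes restricted to $R(e)=Q_z\cup Q_{z'}$, bonds with no common endpoint have disjoint such rectangles, so the bond process on $L$ is $1$-independent; as a $1$-independent bond percolation on $\mathbb{Z}^2$ with every bond open with probability exceeding $0.8639$ has an infinite open cluster almost surely (the Balister--Bollob\'as--Walters bound), the displayed inequality forces $L$, and therefore $\GG$, to percolate.

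I expect the main obstacle to be the combinatorial geometry of condition (A): choosing the two crossing events sharply enough that the splice along any open path of $L$ is genuinely a connected subgraph of $G_1$ at \emph{every} vertex — the collinear ``straight-through'' vertices, where the two incident long crossings run parallel, being the awkward case — while keeping exactly the two-factor form of the bound with the polynomial weights $m+1$, $2m+1$ and the exponent $m$ emerging from the Peierls count, and while simultaneously achieving supply distances $\le d_2$ and $\le\dd$, which is what pins down $D$ and the precise fine-scale resolution defining $p$. Everything afterwards — the $1$-independence of the bond process and the appeal to the $0.8639$ threshold — is routine.
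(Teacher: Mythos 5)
Your proposal follows essentially the same route as the paper's proof: the paper also maps $\GG$ to the $1$-dependent bond model $L_D$ on $D\times D$ squares with $D=\min(d_2/\sqrt{10},\dd/\sqrt5)$, declares a bond open when $G_1$ realises a horizontal crossing of the (slightly shrunk) $2D\times D$ rectangle and a vertical crossing of the shrunk left square together with at least one $G_2$ point in the rectangle, bounds the crossing probabilities by the same Peierls estimate $p_x(km,m,p)\ge 1-\frac43(km+1)e^{m\log 3(1-p)}$ combined via FKG, and invokes the $0.8639$ threshold. The one point you flag as an obstacle --- splicing the crossings at every vertex, including collinear ones --- is resolved in the paper exactly as in the single-RGG argument of Balister et al., by requiring the vertical crossing over the left square $S'_1$ rather than over all of $R(e)$, which is what produces the $(m+1)$ versus $(2m+1)$ factors you already have.
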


This upper bound is obtained by mapping $\GG$ to a dependent bond percolation model $L_D$. The mapping from the Poisson boolean model $G(\lambda, d)$ to $L_D$ was first proposed in \cite{balister2005} to study the percolation threshold of $G(\lambda, d)$, and later applied to the study of a random geometric graph under non-uniform node removals \cite{kong2010}. We briefly describe the method in the previous literature that uses $L_D$ to study the percolation of $G(\lambda, d)$, and then prove Theorem \ref{th:upperlarge} based on a similar method.
\subsubsection{1-dependent bond percolation model $L_D$}
\label{sc:large}
In the standard bond percolation model on a square lattice $L$, the event that a bond is open is independent of the event that any other bond is open. If in a square lattice $L_D$, the event that a bond is open may depend on the event that its adjacent bond is open, but is independent of the event that any non-adjacent bond is open, then $L_D$ is a {\it 1-dependent bond percolation model} on a square lattice. With the additional restriction that each bond is open with an identical probability, an upper bound on the percolation threshold of $L_D$ is 0.8639~\cite{balister2005}.

The 1-dependent bond percolation model $L_D$ can be used to study the percolation of $G'$ where the points are generated by homogeneous Poisson point processes. 
To construct a mapping from $G'$ to $L_D$, consider two adjacent $D \times D$ squares $S_1$ and $S_2$ and let $R$ be the rectangle formed by the two squares. A bond $(v_1, v_2)$ that connects the centers of $S_1$ and $S_2$ is associated with $R$. Figure \ref{onedependent} illustrates the square lattice formed by the bonds, represented by thick line segments.
\begin{figure}[h]
\begin{centering}
\leavevmode\includegraphics[width=0.85\linewidth]{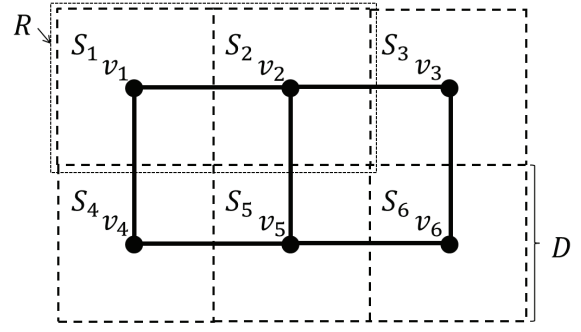}
\caption{Square lattice $L_D$ formed by the bonds $(v_i,v_j)$.}
\label{onedependent}
\end{centering}
\end{figure}

\begin{lem}
Let the state of a bond $(v_1,v_2)$ be determined by the homogeneous Poisson point processes of $G'$ inside $R$, and the conditions for a bond to be open be identical for all bonds. Then the bonds form a 1-dependent bond percolation model $L_D$ with identical bond open probabilities.
\end{lem}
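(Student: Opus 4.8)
The plan is to establish two things separately: (i) every bond has the same open probability, and (ii) the family of bond states is 1-dependent, i.e.\ whenever $E_1,E_2$ are sets of bonds with no bond of $E_1$ sharing an endpoint with a bond of $E_2$, the states $(X_b)_{b\in E_1}$ and $(X_b)_{b\in E_2}$ are independent (this contains in particular the pairwise statement of the definition given in the text). Both parts follow from two standard properties of a homogeneous Poisson point process $\Pi$ on the plane: its law is invariant under every rigid motion in the symmetry group of $L_D$ (translations by multiples of $D$ and rotations by multiples of $\pi/2$), and its restrictions to two regions with disjoint interiors are independent point processes. Index the $D\times D$ squares by their centres, which form the site set of $L_D$, so that a bond $(v_1,v_2)$ is associated with the rectangle $R=S_1\cup S_2$ spanned by the two squares whose centres are $v_1$ and $v_2$, and the state $X_b$ of bond $b$ is a measurable function of $\Pi\cap R_b$ read in the local frame of $R_b$, the same function for every bond.

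For the identical-probability claim, given two bonds $b,b'$ the lattice symmetry group supplies a rigid motion $g$ with $g(R_b)=R_{b'}$ (hence $g(b)=b'$). Since $\Pi\cap R_{b'}$ has the same law as $g\,(\Pi\cap R_b)$ by invariance of $\Pi$, and the openness rule is the same in the two local frames, the events $\{b\text{ open}\}$ and $\{b'\text{ open}\}$ have the same probability.

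For 1-dependence, the key geometric observation is: two distinct bonds $b,b'$ of $L_D$ share an endpoint if and only if $R_b$ and $R_{b'}$ contain a common grid square; consequently, if $b$ and $b'$ share no endpoint then $R_b\cap R_{b'}$ is contained in the grid lines and thus has zero Lebesgue measure. Hence, under the hypothesis on $E_1,E_2$, the regions $\bigcup_{b\in E_1}R_b$ and $\bigcup_{b\in E_2}R_b$ meet in a set of measure zero; since $\Pi$ almost surely places no point on a fixed null set, $\Pi$ restricted to these two regions is a pair of independent point processes, so $(X_b)_{b\in E_1}$ and $(X_b)_{b\in E_2}$, being measurable functions of $\Pi$ on the respective regions, are independent. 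Together with the previous paragraph this shows the bonds form a 1-dependent bond percolation model $L_D$ with identical bond open probabilities.

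The argument is essentially bookkeeping rather than a hard estimate; the only points requiring care are (a) verifying the geometric equivalence ``bonds share an endpoint'' $\Leftrightarrow$ ``associated rectangles share a grid square'', so that the dependence structure matches exactly the adjacency structure of $L_D$, and (b) the null-set handling — that overlap only along grid lines is enough for independence — which is precisely why we invoke that $\Pi$ a.s.\ charges no fixed set of zero Lebesgue measure.
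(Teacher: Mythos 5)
Your proof is correct and follows essentially the same route as the paper's: Poisson invariance under the lattice symmetries gives identical bond open probabilities, and independence of the Poisson process on regions with disjoint interiors gives 1-dependence. The only difference is that you are slightly more careful than the paper in noting that rectangles of non-adjacent bonds may still share grid lines, and that this measure-zero overlap is harmless because the process a.s.\ charges no fixed null set.
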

\iftechreport{
\begin{proof}
The event that a bond is open is not independent of the event that its adjacent bond is open, since the two events both depend on the point process in an overlapping square. However, the event that a bond is open is independent of the event that any non-adjacent bond is open, since their associated rectangles do not overlap and the point processes in the two rectangles are independent.

Moreover, a Poisson point process is invariant under translation and rotation. Given that the points in $G'$ are generated by homogeneous Poisson point processes and the conditions for a bond to be open are identical, the probability that a bond is open is identical for all bonds.
\end{proof}
}
{The proof of this lemma can be found in the technical report \cite{reportRGG}. }By properly setting the conditions for a bond to be open, the
percolation of $L_D$ can imply the percolation of $G'$. We first look at an example in \cite{bollobas2006percolation} that studies the percolation of $G(\lambda,d)$, and then extend the technique to study $\GG$.

{\it Example \cite{bollobas2006percolation}: }
Let a bond be open if a path in $G(\lambda,d)$ crosses\footnote{A path crosses a rectangle $R' = [x_1, x_2] \times [y_1,y_2]$ horizontally if the path consists of a sequence of connected nodes $v_1, v_2, \dots, v_{n-1}, v_n$, and $v_2, \dots, v_{n-1}$ are in $R'$, $x(v_1) \leq x_1, x(v_n) \geq x_2$, $y_1 \leq y(v_1), y(v_n) \leq y_2$, where $x(v_i)$ is the $x$-coordinate of $v_i$ and $y(v_i)$ is the $y$-coordinate of $v_i$. A path crosses a rectangle vertically is defined analogously.} $R'$ horizontally and another path in $G(\lambda,d)$ crosses $S'_1$ vertically, where $R'$ is a $(2D - 2d) \times (D - 2d)$ rectangle that has the same center as $R$, and $S'_1$ is a $(D - 2d) \times (D - 2d)$ square that has the same center as $S_1$. The reason for considering $R'$ and $S'_1$ is that the existence of the two crossing paths over $R'$ and $S'_1$ is determined by the point process within $R$, while the existence of links within distance $d$ from the boundaries (and thus the crossings over $R$) may depend on nodes outside $R$.

If two adjacent bonds are open, the paths in $G(\lambda,d)$ in the two rectangles are joined. To see this, note that in Fig.~\ref{ConnectedCompleteRec}, if the black and blue bonds (same direction) are both open, the crossings 1 and 2 intersect. If the black and red bonds (perpendicular) are both open, the crossings 1 and 3 intersect.

\begin{figure}[h]
\begin{centering}
\leavevmode\includegraphics[width=0.9\linewidth]{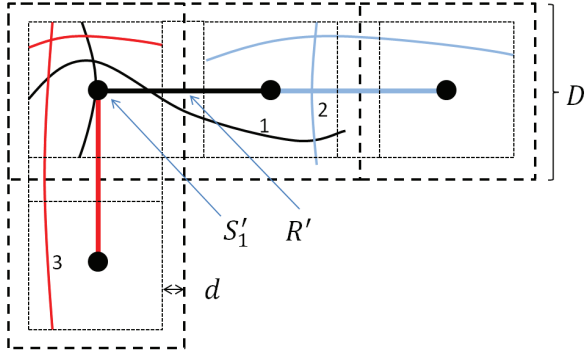}
\caption{Crossings over rectangles associated with two adjacent open bonds are joined.}
\label{ConnectedCompleteRec}
\end{centering}
\end{figure}

If the square lattice $L_D$ percolates, open bonds form an infinite component. Paths in $G(\lambda, d)$ across the rectangles associated with the open bonds are connected and form an infinite component. Therefore, a node density above which $L_D$ percolates is an upper bound on the percolation threshold of $G(\lambda, d)$.

\subsubsection{Proof of Theorem \ref{th:upperlarge}}
\label{sc:dep}
We map $\GG$ to $L_D$ by letting a bond in $L_D$ be open if the following three conditions are satisfied in its associated rectangle $R = S_1 \cup S_2$. The size of the rectangle satisfies $D = \min (d_2 / \sqrt{10}, d_\text{dep} / \sqrt 5)  \geq  d_2 / 2\sqrt 5$.
\begin{enumerate}
\item A path from $G_1$ crosses $R'$ horizontally, where $R'$ is a $(2D - 2d_1) \times (D - 2d_1)$ rectangle that has the same center as $R$.
\item A path from $G_1$ crosses $S'_1$ vertically, where $S'_1$ is a $(D - 2d_1) \times (D - 2d_1)$ square that has the same center as $S_1$.
\item There is at least one node from $G_2$ in $R$.
\end{enumerate}

\iftechreport{
To see that the percolation of $L_D$ implies the percolation of $\GG$, consider any two adjacent open bonds in $L_D$. In the two rectangles associated with the bonds,
1)
paths from $G_1$ that cross one rectangle are joined with paths from $G_1$ that cross the other rectangle;
2)
at least two nodes from $G_2$, one in each rectangle, are connected by a link in $G_2$, because any two nodes in adjacent rectangles are within distance $\sqrt {10} D \leq d_2$;
3)
every node in $G_i$ has at least one supply node in $G_j$ inside the rectangle ($\forall i,j \in \{1,2\}, i \neq j$), in which the distance between two nodes is no larger than $\sqrt 5 D \leq d_\text{dep}$.
}{To bound the percolation thresholds of $\GG$, we prove in \cite{reportRGG} that the percolation of $L_D$ implies the percolation of $\GG$, and compute the probability that the three conditions are satisfied using a method similar in \cite{Tse}.}

\iftechreport{
If the probability $p_{123}$ that a bond is open is above 0.8639, then $L_D$ percolates and $\GG$ also percolates. An upper bound on the percolation threshold of $\GG$ is a pair of node densities $(\lambda_1,\lambda_2)$ that yields $p_{123} \geq 0.8639$.
In the remainder of the proof, we compute $p_{123}$ as a function of $(\lambda_1,\lambda_2)$.

To determine the probability that the first and the second conditions are satisfied, we consider a discrete square lattice represented by Fig.~\ref{latticeCrossing}. Bonds of length $d_1/2$ form a square lattice $L'$ in a {\it finite} $md_1 \times md_1/2$ region, where $m = \lfloor 2D/d_1 \rfloor$. Let a bond in $L'$ be open if there is at least one node from $G_1$ in the $d_1/2\sqrt 2 \times d_1/2\sqrt 2$ square that contains the bond (the small square that has dashed boundaries in the figure), which occurs with probability $ p = 1 - e^{-\lambda_1 d_1^2/8}$.
It is clear that if the open bonds form a horizontal crossing\footnote{A horizontal crossing of open bonds over a rectangle $R'=[x_1,x_2] \times [y_1, y_2]$ consists of a sequence of adjacent open bonds in the rectangle such that at least one bond has an endpoint with $x$-coordinate $x_1$ and at least one bond has an endpoint with $x$-coordinate $x_2$. A vertical crossing of open bonds is defined analogously.} over $L'$, then nodes in $G_1$ form a horizontal crossing path over $R'$. 
\begin{figure}[h]
\begin{centering}
\leavevmode\includegraphics[width=.95\linewidth]{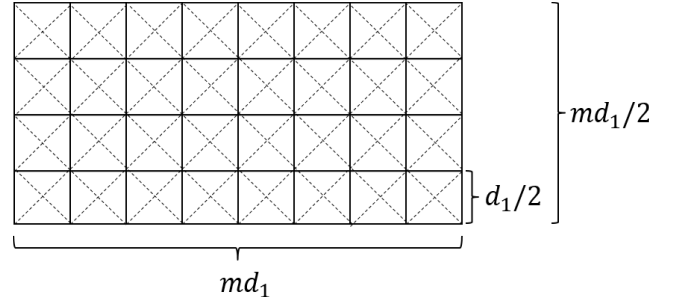}
\caption{Mapping the crossing in $G_1$ to the crossing in a square lattice $L'$.}
\label{latticeCrossing}
\end{centering}
\end{figure}

Let $p_x(km, m, p)$ denote the probability that there exists a horizontal crossing over the $km \times m$ square lattice $L'$ given that each bond is open independently with probability $p$. A lower bound on $p_x(km, m, p)$, Eq. (\ref{eq:crossing}), can be derived by a standard technique in percolation theory (\eg, an extension of Proposition 2 in \cite{Tse}).
\begin{equation}\label{eq:crossing}
  p_x(km, m, p) \geq 1 - \frac{4}{3}(km+1) e^{m \log 3(1-p)}.
\end{equation}
The probability that the crossing exists is close to 1 if $m$ is large and $p > 2/3$.

Finally, the probability that the first condition is satisfied is $p_1 \geq p_x(2m, m, p)$. The probability that the second condition is satisfied is $p_2 \geq p_x(m, m, p)$. Given that the existence of the two crossings are positively correlated, by the FKG inequality \cite{bollobas2006percolation}, the probability that both conditions are satisfied is lower bounded by:
$$p_{12} \geq p_1p_2 \geq p_x(2m, m, p)p_x(m, m, p).$$

The probability that there is at least one node from $G_2$ in $R$ (\ie, the third condition is satisfied) is
$p_3 = 1 - e^{-2 D^2 \lambda_2 }$.
Given that the point processes in $G_1$ and $G_2$ are independent, the probability that a bond is open is $p_{123} = p_{12} p_3$. As long as $p_{123} > 0.8639$, $\GG$ percolates. This completes the proof.
}{}

\subsubsection{An example of two RGGs with large $d_2/d_1$}

We study two interdependent RGGs $G_1$ and $G_2$, which have a finite number of nodes, in order to quantify $d_2/d_1$ as a function of the number of nodes in the graph.
If $d_2 = \Omega (d_1 \log n_1)$, and $d_\text{dep} \geq d_2 /2$, then $m = \Omega(\log n_1)$, where $n_1$ is the expected number of nodes in $G_1$. As $n_1$ approaches infinity, the probability $p_x(km, m, p)$ approaches 1 if $p > 2/3$, by Eq.~\ref{eq:crossing}. 

Applying Theorem \ref{th:upperlarge}, by solving
$p = (1 - e^{-\lambda_1 d_1^2/8}) = 2/3$, and
$p_3 = 1 - e^{-2 D^2 \lambda_2 } = 0.8639,$ we obtain an upper bound on percolation threshold $\lambda_1 = 8.789/d_1^2$, $\lambda_2 = 19.94/d_2^2$.
The bounds suggest that if the ratio between the connection distances of two RGGs is very large, the node density in one RGG may {\it not} affect the minimum node density in the other RGG in order for the giant mutual component to exist in the interdependent RGGs. 

We conjecture that as long as the node density of each individual RGG is above the percolation threshold of the single graph, then the interdependent RGGs percolate, if $d_1 \ll d_2$ and $\dd = (1 + \epsilon)d_2/2$ for $\epsilon > 0$. This can be intuitively explained as below. Let $V^0_2$ denote the nodes in the giant component of a single graph $G_2$ without considering the interdependence. Disks of radius $d_2/2$ centered at nodes in $V^0_2$ are connected. Disks of radius $\dd > d_2/2$ centered at nodes in $V^0_2$ are also connected, and this region contains nodes in $G_1$ that have functional supply nodes. Each disk of radius $\dd$ is so large compared with $d_1$, that the probability that there is a crossing formed by connected nodes in $G_1$ along any direction across the disk approaches one\footnote{If nodes are generated by a Poisson point process with density above the percolation threshold, the probability that there is a horizontal path across a $kl \times l$ rectangle approaches one for any $k$ as $l \rightarrow \infty$ \cite{meester1996continuum}.}. Moreover, the disks of radius $d_\text{dep}$ have overlaps with width and height at least $\epsilon d_2 \gg d_1$, which are sufficiently large to join the paths in $G_1$ across two overlapping disks. Thus, a giant component of $G_1$ exists near the giant component of $G_2$. Nodes in the two components are interdependent and form a giant mutual component.

\subsection{Numerical results}
We verify the bounds in Theorem \ref{th:uppersmall} by simulating $\GG$ in a $10 \times 10$ square. Table \ref{tb:small} illustrates the fraction of nodes from $G_i$ that belong to the largest mutual component, denoted by $f_i$, ($\forall i \in \{1,2\}$). The fractions are averaged over 5 instances of simulations for each combination of $(\lambda_1,\lambda_2,d_1,d_2,\dd)$ that satisfies the condition in Theorem \ref{th:uppersmall}. To verify the bounds in Theorem \ref{th:upperlarge}, we simulate $\GG$ in a $30 \times 30$ square (to simulate a sufficiently large $G_2$ under small node densities). Table \ref{tb:large} illustrates the average fraction of nodes in the largest mutual component, for $(\lambda_1,\lambda_2,d_1,d_2,\dd)$ given by Theorem \ref{th:upperlarge}. We observe that most nodes in $G_1$ and $G_2$ belong to the largest mutual component, which implies that $\GG$ percolates.

\begin{table}[h]
\centering
\caption{Fraction of nodes in the largest mutual component under the condition of Theorem \ref{th:uppersmall}}
\label{tb:small}
\begin{tabular}{|l|l|l|l|l|l|l|}
\hline
$\lambda_1$ & $\lambda_2$ & $d_1$ & $d_2$ & $\dd$  & $f_1$   & $f_2$ \\ \hline
15      & 1.54    & 1  & 3  & 1.5 & 1.00    & 1.00  \\ \hline
20      & 0.92    & 1  & 3  & 1.5 & 0.99 & 1.00  \\ \hline
25      & 0.75    & 1  & 3  & 1.5 & 0.98 & 1.00  \\ \hline
15      & 2.39    & 1  & 2  & 1   & 0.99 & 1.00  \\ \hline
20      & 1.80     & 1  & 2  & 1   & 1.00    & 1.00  \\ \hline
25      & 1.58    & 1  & 2  & 1   & 0.97 & 1.00  \\ \hline
\end{tabular}
\end{table}

\begin{table}[h]
\centering
\caption{Fraction of nodes in the largest mutual component under the condition of Theorem \ref{th:upperlarge}}
\label{tb:large}
\begin{tabular}{|l|l|l|l|l|l|l|}
\hline
$\lambda_1$ & $\lambda_2$ & $d_1$ & $d_2$ & $\dd$   & $f_1$   & $f_2$ \\ \hline
16      & 0.190    & 1  & 10 & 7.07 & 1.00    & 1.00  \\ \hline
17      & 0.123   & 1  & 10 & 7.07 & 1.00    & 1.00  \\ \hline
25      & 0.100     & 1  & 10 & 7.07 & 1.00    & 1.00  \\ \hline
17      & 0.385  & 1  & 8  & 5.66 & 1.00    & 1.00  \\ \hline
18      & 0.207   & 1  & 8  & 5.66 & 1.00    & 1.00  \\ \hline
25      & 0.156   & 1  & 8  & 5.66 & 0.99 & 1.00  \\ \hline
\end{tabular}
\end{table}

\color{black}
{\it Remark: }We have assumed that $d_\text{dep} \geq \text{max}(d_1/2, d_2/2) = d_2/2$ throughout this section. To see that this is a reasonable assumption, note that nodes in $G_1$ that have at least one functional supply node are restricted in the region $R_\text{dep}$, where $R_\text{dep}$ is the union of disks with radius $d_\text{dep}$ centered at nodes in the infinite component of $G_2$. If $R_\text{dep}$ is fragmented, it is not likely for disks of radius $d_1/2<d_2/2$ centered at random locations within $R_\text{dep}$ to overlap, and it is not likely that a functional infinite component will exist in $G_1$, unless the node density in $G_1$ is large. Therefore, the interdependent distance $d_\text{dep}$ should be large enough so that $R_\text{dep}$ is a connected region, to avoid a large minimum node density in $G_1$. The region $R_\text{dep}$ can be made larger by increasing either $\lambda_2$ or $d_\text{dep}$. Setting $d_\text{dep} \geq d_2 / 2$ avoids increasing $\lambda_2$ high above the percolation threshold of $G_2$, in order for $R_\text{dep}$ to be connected. In Section \ref{sc:interval}, we develop a more general approach that does not require this assumption.
\section{Confidence intervals for percolation thresholds}
\label{sc:interval}
\textcolor{black}{
In this section, we compute confidence intervals for percolation thresholds. The confidence intervals provide interval estimates for the percolation thresholds. If the node densities in $\GG$ are below the lower confidence bounds, then there does not exist an infinite mutual component in $\GG$ with high confidence. On the other hand, if the node densities are above the upper confidence bounds, then there exists an infinite mutual component in $\GG$ with high confidence. Compared with the analytical upper bounds in Section \ref{sc:upperbound}, the numerical upper confidence bounds are much tighter. Moreover, the techniques in this section apply to $\GG$ with general $d_1, d_2, \dd$.}

The mapping to compute confidence intervals is related to the mapping from $\GG$ to the 1-dependent bond percolation model $L_D$ in Section \ref{sc:onedeplarge}. Both mappings satisfy the following properties: 1) the percolation of $L_D$ implies the percolation of $\GG$; 2) the event that determines the state of a bond depends only on the point process within its associated rectangle, thus preserving the 1-dependent property. The probability that the event occurs can be computed or bounded analytically in the previous section. In contrast, in this section, we consider events whose probabilities are larger under the same point processes but can only be evaluated by simulation. Since the events that we consider in this section are more likely to occur under the same point processes, the mappings yield tighter bounds.

Our mappings from $\GG$ to $L_D$ extend the mappings from $G(\lambda,d)$ to $L_D$ proposed in \cite{balister2005}. For completeness, we first briefly summarize the mappings in \cite{balister2005} that compute upper and lower bounds on the percolation threshold of $G(\lambda,d)$. 

{\it Upper bound for $G(\lambda,d)$ \cite{balister2005}:} Recall Fig.~\ref{onedependent}. The event that a bond $(v_1,v_2) \in L_D$ is open is determined by the point process of $G(\lambda,d)$ in the rectangle $R = S_1 \cup S_2$, where $S_1$ and $S_2$ are squares. Let $V_i$ denote the largest component formed by the points of $G(\lambda,d)$ in $S_i$. If $V_i$ is the {\it unique} largest component in $S_i$ ($\forall i \in \{1,2\}$) and $V_1$ and $V_2$ are connected, then the bond is open. Otherwise, the bond is closed. 

If $L_D$ percolates, open bonds form an infinite component. As a result, the largest components in the squares that intersect the open bonds are connected in $G(\lambda,d)$ and they form an infinite component. Therefore, a node density $\lambda$, above which the probability that 
a bond is open is larger than 0.8639, is an upper bound on the percolation threshold of $G(\lambda,d)$.

{\it Lower bound for $G(\lambda,d)$ \cite{balister2005}:} Let the {\it connection process} of $G(\lambda, d)$ be the union of nodes and links in $G(\lambda, d)$. Let the {\it complement} of the connection process be the union of the empty space that does not intersect nodes or links. If the complement of the connection process form a connected infinite region, then all the connected components in $G(\lambda,d)$ have finite sizes and $G(\lambda, d)$ does not percolate
~\cite{balister2005, roy1990russo}. Consider the complement of the connection process in rectangle $R$. Let a bond (in $L_D$) associated with rectangle $R$ be open if the complement process forms a horizontal crossing\footnote{The complement of a connection process forms a horizontal crossing over a rectangle if a curve in the rectangle touches the left and right boundaries of the rectangle and the curve does not intersect any nodes or links. The vertical crossing of the complement process is defined analogously.} over the rectangle $R'$ and a vertical crossing over the square $S'_1$. \iftechreport{Recall that rectangle $R'$ is the $(2D - 2d) \times (D - 2d)$ rectangle that has the same center as $R$, and square $S'_1$ is the $(D - 2d) \times (D - 2d)$ square that has the same center as $S_1$, the left square in $R$. For example, in Fig. \ref{complete}, the two crossings that do not intersect any nodes or links are plotted.

\begin{figure}[h]
\begin{centering}
\leavevmode\includegraphics[width=0.95\linewidth]{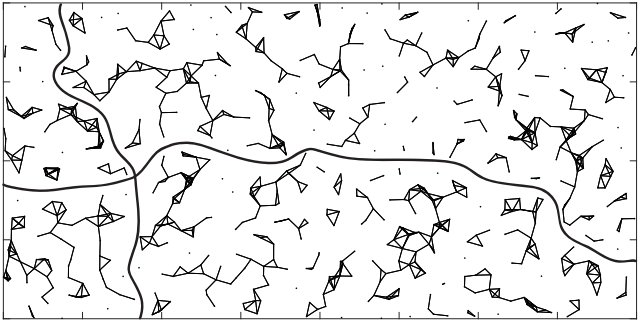}
\caption{The horizontal and vertical crossings from the complement of the connection process over the rectangle.}
\label{complete}
\end{centering}
\end{figure}
}{}

If $L_D$ percolates, the complement process forms an infinite region and $G(\lambda,d)$ does not percolate. To conclude, a node density, under which the probability that the complement process forms the two crossings is above 0.8639, is a lower bound on the percolation threshold for $G(\lambda,d)$.

\subsection{Upper bounds for $\GG$}
\label{sc:upper}

In $G(\lambda,d)$, the largest connected component that contains a node $b$ can be computed efficiently by contracting the links (or using a breadth-first-search) starting from $b$. Two components are connected and form one component if there exists two nodes within distance $d$, one in each component. We next extend these notions to $\GG$. 

Let $G_1$ and $G_2$ denote the two graphs in $\GG$. Let $b_1 \in G_1$ and $b_2 \in G_2$ denote two nodes within the interdependent distance $\dd$. Algorithm \ref{alg:mutual1} computes the largest mutual component $M(b_1, b_2)$ that contains $b_1$ and $b_2$. The correctness follows from the definition of mutual component.
\begin{algorithm}[h]
\caption{Computing the largest mutual component that contains two specified nodes $b_i \in G_i$ within $\dd$ ($\forall i \in \{1,2\}$).}
\begin{enumerate}
\item Find all the nodes $V^0_i(b_i)$ that are connected to $b_i$ (either directly or through a sequence of links) in $G_i$ ($\forall i \in \{1,2\}$).
\item Remove nodes in $V^0_i(b_i)$ that do not have any supply nodes in $V^0_j(b_j)$ ($\forall i,j \in \{1,2\}, i \neq j$). Among the remaining nodes, find the nodes $V^1_i(b_i) \subseteq V^0_i(b_i)$ that are connected to $b_i$ ($\forall i \in \{1,2\}$).
\item Repeat step 2 until $V^{k+1}_i(b_i) = V^k_i(b_i)$ ($\forall i \in \{1,2\}$). Let $M(b_1, b_2) = V^k_1(b_1) \cup V^k_2(b_2)$.
\end{enumerate}
\label{alg:mutual1}
\end{algorithm}
\color{black}

Two mutual components $M = V_1 \cup V_2$ and $\hat M = \hat V_1 \cup \hat V_2$ form one mutual component if and only if $V_i$ and $\hat V_i$ are connected in $G_i$ ($\forall i \in \{1,2\}$). The necessity of the condition is obvious. To see that this condition is sufficient, note that every node in the connected component formed by $V_i$ and $\hat V_i$ has at least one supply node that belongs to the connected component formed by $V_j$ and $\hat V_j$ ($\forall i, j \in \{1,2\}, i \neq j$). The condition can be generalized naturally for more than two mutual components to form one mutual component.

The method of obtaining an upper bound on the percolation threshold of $G(\lambda,d)$ can be modified to obtain an upper bound on the percolation threshold of $\GG$, by declaring a bond to be open if the unique largest mutual components in the two adjacent $D \times D$ squares $S_1$ and $S_2$ are connected. However, computing the largest mutual component of $\GG$ in $S_i$ is not as straightforward as computing the largest component of $G(\lambda,d)$ in $S_i$.
In $G(\lambda,d)$, a node belongs to exactly one (maximal) connected component. All the components can be obtained by contracting the links, and the largest component can be obtained by comparing the sizes of the components. However, in $\GG$, a node may belong to multiple mutual components. For example, let $b_1$ and $b_2$ be two isolated nodes in $G_1$, and let $b_3$ and $b_4$ be two connected nodes in $G_2$. If both $b_1$ and $b_2$ are within the interdependent distance from $b_3$ and $b_4$, $\{b_1, b_3, b_4\}$ and $\{b_2, b_3, b_4\}$ are two mutual components.
An algorithm that computes the largest mutual component of $\GG$ in a square 1) selects a pair of nodes, one from each graph, and computes the largest mutual component that contains the two nodes by Algorithm \ref{alg:mutual1}, and then 2) chooses the largest mutual component over all pairs of nodes in the square within the interdependent distance. Thus, it requires much more computation than finding the largest component of $G(\lambda,d)$ in a square.

Instead of optimizing the algorithm and obtaining the largest mutual component in square $S$, a mutual component $M^\text{greedy}(S)$ can be computed by Algorithm \ref{alg:mutual2}. This algorithm has good performance in finding a large mutual component when the square size is large. In particular, if the square had infinite size, this algorithm would find an infinite mutual component if one exists.

\begin{algorithm}[h]
\caption{An algorithm that greedily computes a mutual component $M^\text{greedy}(S)$ in region $S$.}
\begin{enumerate}
\item Find the largest connected component $V^0_i(S)$ in $G_i(S)$, where $G_i(S)$ consists of the nodes and links of $G_i$ in region $S$. If there is more than one largest connected component, apply any deterministic tie-breaking rule (\eg, choose the component that contains a nodes with the smallest $x$-coordinate).
\item Remove nodes in $V^0_i(S)$ that do not have supply nodes in $V^0_j(S)$ ($\forall i,j \in \{1,2\}, i \neq j$). Find the largest connected component $V^1_i(S)$ formed by the remaining nodes in $V^0_i(S)$ ($\forall i \in \{1,2\}$), and apply the same tie-breaking rule.
\item Repeat step 2 until $V^{k+1}_i(S) = V^k_i(S)$ ($\forall i \in \{1,2\}$). Let $M^\text{greedy}(S) = V^k_1(S) \cup V^k_2(S)$.
\end{enumerate}
\label{alg:mutual2}
\end{algorithm}

Let a bond $(v_1,v_2)$ in $L_D$ be open if the two components $M^\text{greedy}(S_1)$ and $M^\text{greedy}(S_2)$ form one mutual component\iftechreport{}{, where $M^\text{greedy}(S_i)$ is computed by Algorithm \ref{alg:mutual2}. See \cite{reportRGG} for the rationale behind this algorithm}. Since $M^\text{greedy}(S_i)$ is unique in any square $S_i$, a connected component in $L_D$ implies that $\{M^\text{greedy}(S_i)\}$ form one mutual component in $\GG$, where $S_i$ are the squares that intersect the open bonds in the connected component in $L_D$. If the probability that a bond is open is larger than 0.8639, $L_D$ percolates and $\GG$ also percolates.

An alternative condition for a bond to be open is that nodes in $M^\text{greedy}(R)$ form a horizontal crossing over rectangle $R'$ and a vertical crossing over square $S'_1$ in both graphs (recall Fig. \ref{ConnectedCompleteRec} and the condition for two mutual components to form one mutual component). In order for the existence of the two crossings to only depend on the point processes in $R$, in the definition of the $(2D - 2d) \times (D - 2d)$ rectangle $R'$ and the $(D - 2d) \times (D - 2d)$ square $S'_1$, $d = \max(d_1, d_2) + d_\text{dep}$. 

An upper bound on the percolation threshold can be obtained by either approach.
The smaller bound obtained by the two approaches is a better upper bound on the percolation threshold for $\GG$.

\subsection{Lower bounds for $\GG$}
\label{sc:lower}
In $\GG$, the connection process consists of nodes and links in mutual components. To avoid the heavy computation of mutual components, we study another model in which the connection process $\tilde P_i$ of $G_i$ in the new model {\it dominates}\footnote{One connection process dominates another if the nodes and links in the first process form a superset of the nodes and links in the second process, for any realization of $G_i$.} the connection process $P_i$ of $G_i$ in $\GG$ ($\forall i \in \{1,2\}$).
As a consequence, the complement of the connection process $\tilde P_i^c$ of $G_i$ in the new model is dominated by $P_i^c$ ($\forall i \in \{1,2\}$). If $\tilde P_i^c$ percolates, then $P_i^c$ percolates and $P_i$ does not percolate (\ie, all the components in $P_i$ have finite sizes). If either $P_1$ or $P_2$ does not percolate, then $\GG$ does not percolate. Thus, node densities under which at least one of $\tilde P_1^c$ and $\tilde P_2^c$ percolates are lower bounds on the percolation thresholds of $\GG$.

The new model can be viewed to have a {\it relaxed} supply requirement. In this model, every node (as opposed to nodes in the same mutual component) is viewed as a valid supply node for nodes in the other graph. A node $b_i$ in $G_i$ is removed if and only if there is no node in $G_j$ within the interdependent distance $d_\text{dep}$ from $b_i$ ($\forall i, j \in \{1,2\}, i \neq j$). After all such nodes are removed, the remaining nodes in $G_i$ are connected if their distances are within the connection distance $d_i$. The computation of the connection process $\tilde P_i$ is efficient and avoids the computation of mutual components in $\GG$ through multiple iterations.

The connection process $\tilde P_i$ in the new model dominates $P_i$ in the original model $\GG$. On the one hand, for any realization, all the links in $P_i$ are present in $\tilde P_i$, because all the nodes in a mutual component have supply nodes, and links between these nodes are present in the new model as well. On the other hand, in the new model, nodes in a connected component $\tilde V_i$ in $G_i$ may depend on nodes in multiple components in $G_j$. In contrast, in $\GG$, the nodes in $\tilde V_i$ may be divided into several mutual components, and links do not exist between two disjoint mutual components.

An algorithm that computes a lower bound on the percolation threshold of $\GG$ is as follows. First, compute the connection process $\tilde P_i$ in the new model. Next, in the $2D \times D$ rectangle $R$, consider the complement of the connection process $\tilde P^c_i$. 
Let $p_i$ denote the probability that there is a horizontal crossing over $R'$ and a vertical crossing over $S'_1$ in the complement process $\tilde P^c_i$, where $R'$ and $S'_1$ are the same as before. 
A lower bound on the percolation threshold of $\GG$ is given by node densities under which $\max(p_1, p_2) \geq 0.8639$.

\subsection{Confidence intervals}
The probability that a bond is open can be represented by an integral that depends on the point processes in the rectangle $R$. However, direct calculation of the integral is intractable; so instead the integral is evaluated by simulation.
In every trial of the simulation, nodes in $G_1$ and $G_2$ are randomly generated by the Poisson point processes with densities $\lambda_1$ and $\lambda_2$, respectively. The events that a bond is open are independent in different trials. Let the probability that a bond is open be $p$ given $(\lambda_1, \lambda_2)$. The probability that a bond is closed in $k$ out of $N$ trials follows a binomial distribution. The interval $[0.8639, 1]$ is a $99.5 \%$ confidence interval \cite{casella2002statistical} for $p$, given that $N = 100$ and $k = 5$. If $k < 5$, $p \in [0.8639,1]$ with a higher confidence. This suggests that if $k \leq 5$, with $99.5\%$ confidence, $ p \geq 0.8639$ and the 1-dependent bond percolation model $L_D$ percolates given $(\lambda_1,\lambda_2)$.

Based on this method, with $99.5\%$ confidence, an upper bound on the percolation threshold of $\GG$ can be obtained by declaring a bond to be open using the method in Section \ref{sc:upper}, and a lower bound can be obtained by declaring a bond to be open using the method in Section \ref{sc:lower}. For a fixed $\lambda^*_2$, a $99\%$ confidence interval for $\lambda^*_1$ is given by the interval between the upper and lower bounds. Confidence intervals for different percolation thresholds can be obtained by changing the value of $\lambda^*_2$ and repeating the computation. We make a similar remark as in \cite{balister2005}. The confidence intervals are rigorous, and the only uncertainty is caused by the stochastic point processes in the $2D \times D$ rectangle. This is in contrast with the confidence intervals obtained by estimating whether $\GG$ percolates based on extrapolating the observations of simulations in a finite region (which is usually not very large because of limited computational power).

\subsection{Numerical results}
The simulation-based confidence intervals are much tighter than the analytical bounds.
Given that $d_1 = d_2 = 2d_{\text{dep}} = 1$, and $\lambda^*_2 = 2$, the upper and lower bounds on $\lambda^*_1$ are 2.25 and 1.80, respectively, both with $99.5\%$ confidence. In contrast, even if $\lambda^*_2 \rightarrow \infty$, the analytical upper bound on $\lambda^*_1$ is no less than 3.372, which is the best available analytical upper bound for a single $G_1$ \cite{hall1985continuum}. Confidence intervals for the percolation thresholds are plotted in Fig. \ref{ci1}, where the intervals between bars are $99\%$ confidence intervals. 
\begin{figure}[h]
\begin{centering}
\leavevmode\includegraphics[width=\linewidth]{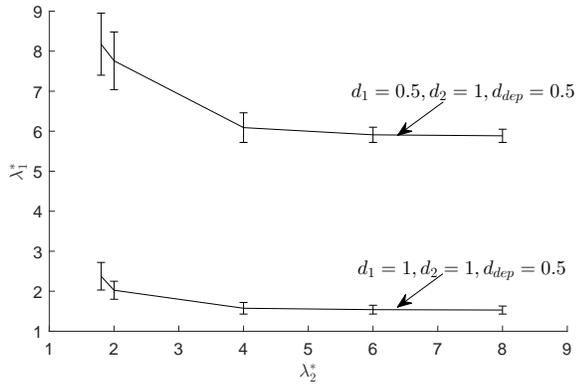}
\caption{The $99\%$ confidence intervals for percolation thresholds of $\GG$ with different connection distances.}
\label{ci1}
\end{centering}
\end{figure}

\textcolor{black}{
To verify the confidence intervals, we simulate $\GG$ within a $20 \times 20$ square, for $d_1 = d_2 = 2 \dd = 1$. Nodes in the largest mutual component are colored black, while the remaining nodes are colored blue. In Fig. \ref{intupper}, the node densities are at the upper confidence bound ($\lambda_1 = 2.25, \lambda_2 = 2.00$), and there exists a mutual component that consists of a large fraction of nodes. In Fig. \ref{intlower}, the node densities are at the lower confidence bound ($\lambda_1 = 1.80, \lambda_2 = 2.00$), and the size of the largest mutual component is small.}

\begin{figure}[h]
\begin{centering}
\leavevmode\includegraphics[width=\linewidth]{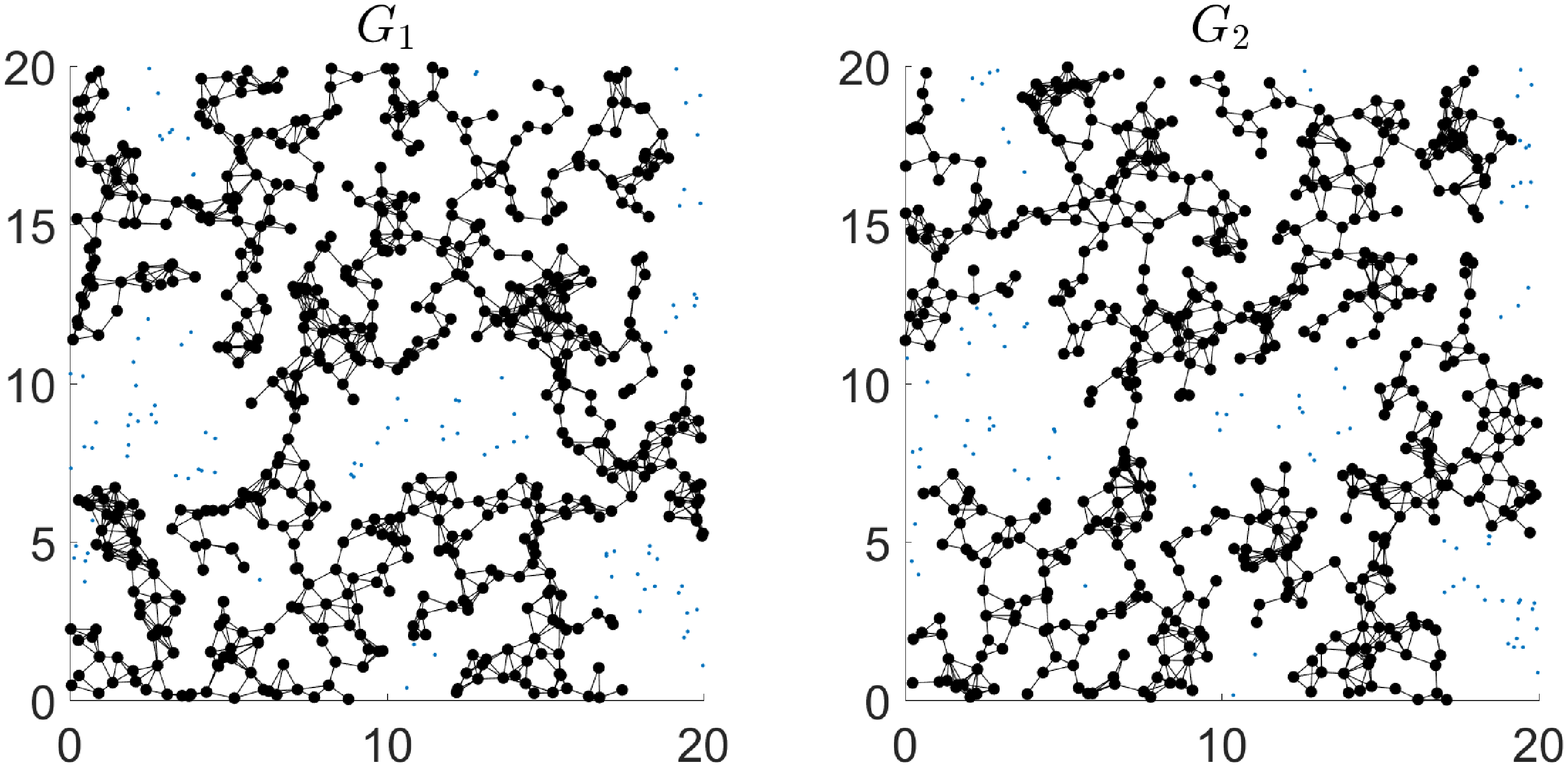}
\caption{The largest mutual component for $\lambda_1 = 2.25, \lambda_2 = 2.00, d_1 = d_2 = 2 \dd = 1$.}
\label{intupper}
\end{centering}
\end{figure}

\begin{figure}[h]
\begin{centering}
\leavevmode\includegraphics[width=\linewidth]{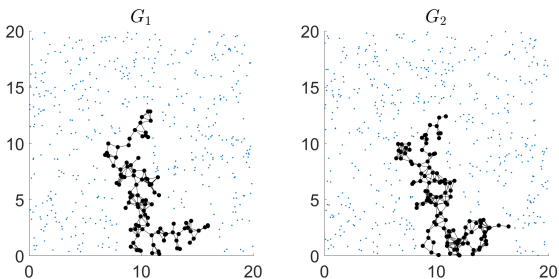}
\caption{The largest mutual component for $\lambda_1 = 1.80, \lambda_2 = 2.00, d_1 = d_2 = 2 \dd = 1$.}
\label{intlower}
\end{centering}
\end{figure}

We next study the impact of interdependent distance $\dd$ on the percolation thresholds. Given $d_1, d_2, \lambda^*_2$, a smaller $\dd$ leads to a higher $\lambda^*_1$, since the probability that a node in $G_1$ has at least one supply nodes from $G_2$ decreases for a smaller $\dd$. The effect is more significant when the number of nodes in $G_2$ is small. This is consistent with Fig. \ref{ci}, where the increase of $\lambda^*_1$ is more significant as $\dd$ decreases when $\lambda^*_2$ is small.

The confidence intervals confirm that the reduced node density due to a lack of supply nodes is not sufficient to characterize the percolation of one of the interdependent graphs. The average density of nodes in $G_1$ that have at least one node within $\dd$ in $G_2$ is $\tilde \lambda_1 = \lambda_1 (1 - e^{-\lambda_2 \pi \dd^2})$, given that $e^{-\lambda_2 \pi \dd^2}$ is the probability that there is no node in $G_2$ within a disk area $\pi \dd^2$. If $\lambda^*_2 = 1.8$, with $99\%$ confidence, $\lambda^*_1 \in [2.03, 2.72]$ when $\dd = 0.5$, and $\lambda^*_1 \in [7.50, 11.20]$ when $\dd = 0.25$. We observe that the ranges of $\tilde \lambda^*_1$ are different: $\tilde \lambda^*_1 \in [1.54, 2.06]$ when $\dd = 0.5$, and $\tilde \lambda^*_1 \in [2.23, 3.33]$ when $\dd = 0.25$. Intuitively, nodes in $G_1$ that have at least one supply nodes are clustered around the nodes in $G_2$, smaller $\dd$ leads to a more clustered point process. The critical node density of a clustered point process is not the same as the critical node density of the homogeneous Poisson point process for percolation. More detailed study on the percolation of a clustered point process can be found in \cite{blaszczyszyn2014comparison}.
\begin{figure}[h]
\begin{centering}
\leavevmode\includegraphics[width=\linewidth]{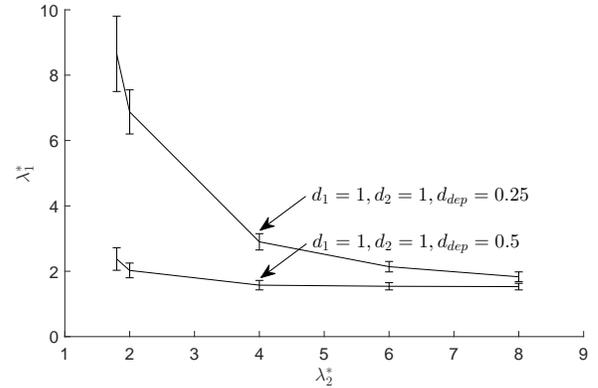}
\caption{The $99\%$ Confidence intervals for percolation thresholds of two $\GG$ with different interdependence distances.}
\label{ci}
\end{centering}
\end{figure}


\color{black}
\section{Robustness of interdependent RGGs under random and geographical failures}
\label{sc:robustness}
Removing nodes independently at random with the same probability in one graph is equivalent to reducing the node density of the Poisson point process. To study the robustness of $\GG$ under random failures, the first step is to obtain the upper and lower bounds on percolation thresholds. With the bounds, we can determine which graph is able to resist more random node removals, by comparing the gap between the node density $\lambda_i$ and the percolation threshold $\lambda^*_i$ given $\lambda_j$ ($i,j \in \{1,2\}, i \neq j$). The graph that can resist a smaller fraction of node removals is the bottleneck for the robustness of $\GG$. Moreover, we are able to compute the maximum fraction of nodes that can be randomly removed from two graphs while guaranteeing $\GG$ to be percolated.

We next show that $\GG$ still percolates after a geographical attack that removes nodes in a finite connected region, if the node densities of the two graphs before the attack are above any {\it upper bound} on the percolation thresholds obtained in this paper (either analytical or simulation-based). Recall that we obtained upper bounds on the percolation thresholds of $\GG$ by mapping the percolation of $\GG$ to either the independent bond percolation on a square lattice $L$ or the 1-dependent bond percolation on a square lattice $L_D$. Under both mappings, the event that a bond $e$ is open is entirely determined by the point processes in a finite region $R_e$ that contains the bond. After removing nodes of $\GG$ in a connected finite geographical region, the state of a bond $e$ may change from open to closed only if $R_e$ intersects the attack region. Let $R_f$ be the union of $R_e$ that intersects the attack region. The region $R_f$ is also a connected finite region. As long as $L$ or $L_D$ still percolates after setting bonds in $R_f$ to be closed, $\GG$ percolates.

Results from the percolation theory indeed indicate that setting all the bonds in a finite region $R_f$ to be closed does not affect the percolation of $L$ or $L_D$. For any percolated $L$, the probability that there exists a horizontal crossing of open bonds over a $kl \times l$ rectangle approaches 1 for any integer $k>1$, as $l \rightarrow \infty$ (Lemma 8 on Page 64 of \cite{bollobas2006percolation}). The percolation of $L$ (after setting all bonds in $R_f$ to be closed) is justified by the fact that the connected open bonds across rectangles form a square annulus that does not intersect $R_f$ (shown in Fig. \ref{geoAnnulus}), which is a standard approach to prove the percolation of $L$ \cite{bollobas2006percolation}. Moreover, the percolation of $L_D$ after all bonds in $R_f$ are set closed can be proved in the same approach, by noting that the probability that open bonds of $L_D$ form a horizontal crossing over a rectangle approaches 1 as the rectangle size increases to infinity \cite{balister2005}.

If the $kl \times l$ rectangle is large but finite, the probability that a horizontal crossing formed by open bonds exists is close to 1 if $L$ or $L_D$ percolates. Therefore, the same analysis demonstrates the robustness of two finite interdependent RGGs under a geographical attack that removes the nodes in a disk region of size $\beta a^2$, where $0 < \beta < 1$.
\iftechreport{}{Simulations for the robustness of interdependent RGGs under geographical attacks can be found in~\cite{reportRGG}.}
\begin{figure}[h]
\begin{centering}
\leavevmode\includegraphics[width=0.5\linewidth]{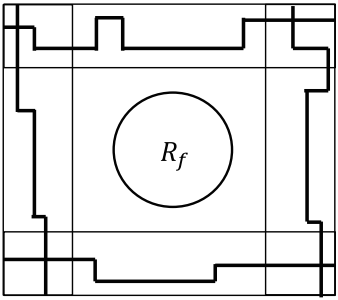}
\caption{Open bonds form a connected path across rectangles around $R_f$.}
\label{geoAnnulus}
\end{centering}
\end{figure}

\iftechreport{
The robustness of interdependent RGGs under geographical failures is illustrated in Fig.~\ref{geo2}. 
Nodes and links in the giant mutual component are colored black. The interdependent RGGs still percolate after all the nodes in a disk region are removed. This is in contrast with the cascading failures observed in \cite{berezin2015localized} in the interdependent lattice model after an initial disk attack. One reason may be that every node can have more than one supply node in our model, while every node has only one supply node in \cite{berezin2015localized}. 
The multiple localized interdependence helps the interdependent RGGs to resist geographical attacks.
\begin{figure}[h]
\begin{centering}
\includegraphics[width=1.05\linewidth, height = 1.5in]{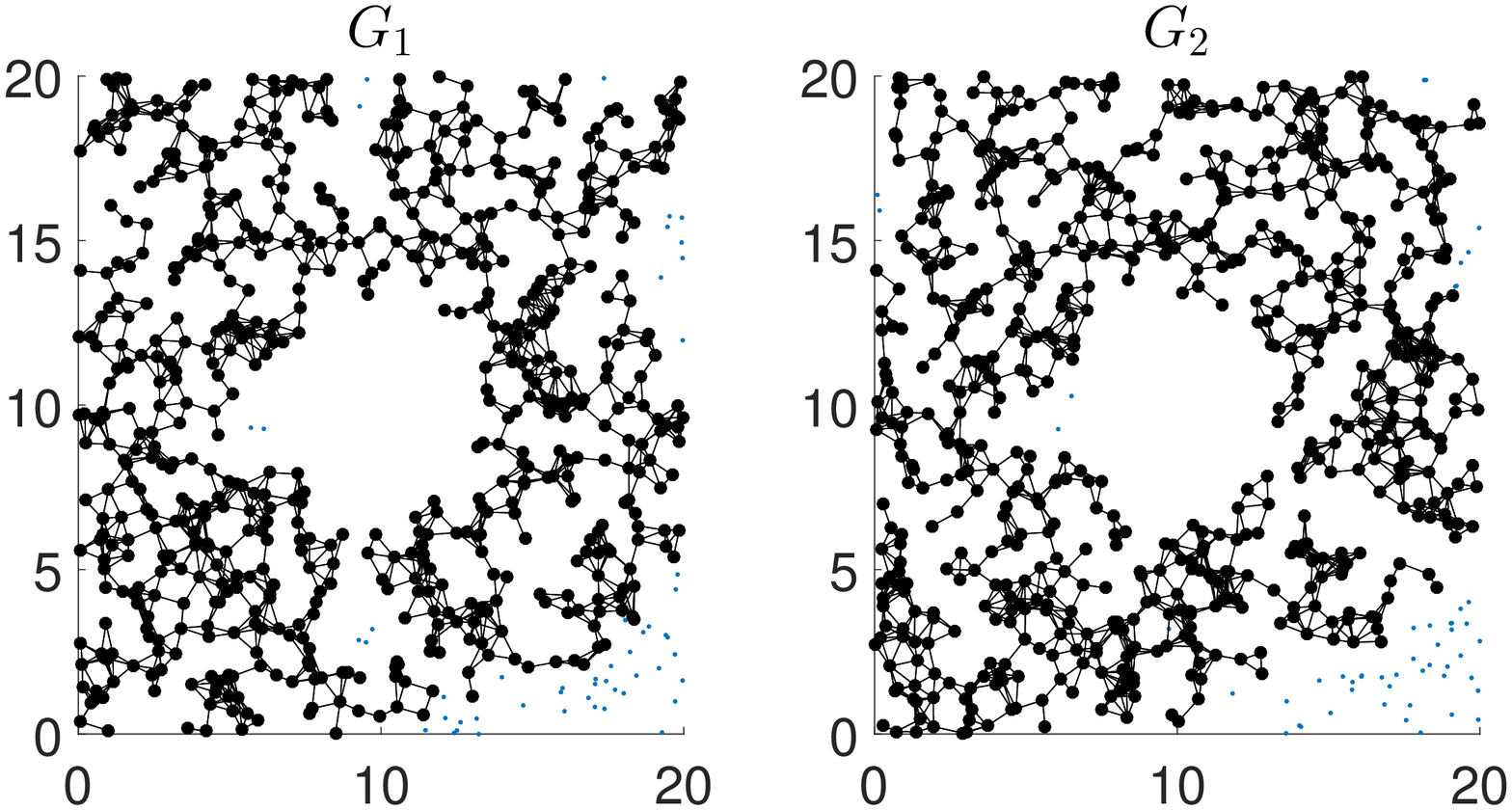}
\caption{Interdependent RGGs with the same connection distance $d_1 = d_2 = 1$ and $\dd = 0.5$.}
\label{geo2}
\end{centering}
\end{figure}
}{}


\section{Extensions to more general interdependence}
\label{sc:extension}
In the previous sections, we studied a model where every node in $G_i$ is content to have at least one supply node in $G_j$ in the same mutual component ($\forall i,j \in \{1,2\}, i \neq j$). The techniques can be extended to study models where every node in $G_i$ must have at least $K_j$ supply nodes from $G_j$ to receive enough supply, where $K_j$ can be either a constant or a random variable ($\forall i,j \in \{1,2\}, i \neq j$). We briefly discuss the extensions to models with more general supply requirement using the example in Section \ref{sc:example}, where $d_1 = d_2 = 2 \dd$.

\subsection{Deterministic supply requirement}
The extension is straightforward if $K_i$ is a constant, $\forall i \in \{1,2\}$. By the same discretization technique, the state of a site in the triangle lattice is determined by the point processes in a cell of area $A$ (recall Fig. \ref{lattice}). Declare a site to be open if there are at least $K_i$ nodes from $G_i$ in the cell that contains the site ($\forall i,j \in \{1,2\}, i \neq j$). For each open site, every node from $G_i$ in the cell has at least $K_j$ supply nodes from $G_j$ in the same cell, satisfying the supply requirement. Following the same analysis as that in Section \ref{sc:example}, the percolation of the triangle lattice implies the percolation of $\GG$.

For a Poisson point process of density $\lambda_j$, the probability that there are at least $K_j$ nodes in a cell of area $A$ is $1 - \sum_{l = 0}^{K_j - 1} (\lambda_j A)^l e^{-\lambda_j A} / l!$. An upper bound on the percolation thresholds is given by $(\lambda_1, \lambda_2)$ that satisfies:
$$ \left[1 - \sum_{l = 0}^{K_1 - 1} \frac{(\lambda_1 A)^l e^{-\lambda_1 A}}{l!} \right] \left[1 - \sum_{l = 0}^{K_2 - 1} \frac{(\lambda_2 A)^l e^{-\lambda_2 A}}{l!} \right] = \frac{1}{2}.$$

\subsection{Random supply requirement}
Some extra work is necessary if $K_i$ is a random variable, $\exists i \in \{1,2\}$. For simplicity, we first consider the case where $K_1 \geq 1 $ is a constant and $K_2$ is a discrete random variable with a cumulative distribution function $F_{K_2}(x)$, $x \in \mathbb{N}$. Furthermore, we assume that the number of supply nodes needed by every node in $G_1$ is independent. After the discretization, a site in the triangle lattice is open if the following two conditions are satisfied for at least one integer-valued $k_2 \geq 1$.
\begin{enumerate}
  \item There are exactly $k_2$ nodes from $G_2$ in the cell.
  \item There are at least $K_1$ nodes from $G_1$ in the cell, each of which needs no more than $k_2$ supply nodes.
\end{enumerate}
If both conditions are satisfied, at least $K_1$ nodes from $G_1$ and the $k_2$ nodes from $G_2$ each have enough supply. It is easy to see that the percolation of the triangle lattice still implies the percolation of $\GG$.

Next we compute the probability that the two conditions are satisfied. The probability that there are $k_2$ nodes from $G_2$ in the cell is:
$$\Pr(N_2 = k_2) = (\lambda_2 A)^{k_2} e^{-\lambda_2 A} / k_2 ! .$$
The probability that there are $l$ nodes from $G_1$ in the cell is:
$$\Pr(N_1 = l) = (\lambda_1 A)^{l} e^{-\lambda_1 A} / l ! .$$
The probability that a node in $G_1$ needs no more than $k_2$ supply nodes is $F_{K_2}(k_2)$. Since the number of supply nodes needed by every node in $G_1$ is independent, the probability that at least $K_1$ out of the $l$ nodes in $G_1$ each need no more than $k_2$ supply nodes is:
\begin{align*}\label{eq:Kmin}
  \Pr(K_2^{(K_1)} &\leq k_2 | N_1 = l) = \\
  & \sum_{t = K_1}^{l}{l \choose t} [F_{K_2}(k_2)]^{t} [1 - F_{K_2}(k_2)]^{l - t},
\end{align*}
for $K_1 \leq l$, and $\Pr(K_2^{(K_1)} \leq k_2 | N_1 = l) = 0$ for $K_1 > l$.
By the law of total probability, for a given $k_2$, the probability that there exist at least $K_1$ nodes from $G_1$ in the cell that each need no more than $k_2$ supply nodes is:
\begin{equation*}\label{eq:KminTotal}
  \Pr(K_2^{(K_1)} \leq k_2) = \sum_{l \geq K_1} \Pr(N_1 = l) \Pr(K_2^{(K_1)} \leq k_2 | N_1 = l).
\end{equation*}
Since the events that there are exactly $k_2$ nodes from $G_2$ in the cell are mutually exclusive for distinct values of $k_2$, using the law of total probability again, the probability that both conditions are satisfied is:
\begin{align*}
p_{12} &= \sum_{k_2 \geq 1} \Pr(N_2 = k_2) \Pr(K_2^{(K_1)} \leq k_2).
\end{align*}
Any $(\lambda_1, \lambda_2)$ that satisfies $p_{12} \geq 1/2$ is an upper bound on the percolation threshold of $\GG$.

Finally, we consider the case where both $K_1$ and $K_2$ are discrete random variables. Suppose that $N_i$ nodes from $G_i$ are in the cell of area $A$. If there exist integers $k_i^* \leq N_i$, such that at least $k_i^*$ nodes from $G_i$ each need no more than $k_j^*$ supply nodes, then the $k_i^*$ nodes from $G_i$ all have enough supply ($\forall i,j \in \{1,2\}, i \neq j$). However, it is difficult to obtain a clean formula of the probability that $(k_1^*, k_2^*)$ exists (to satisfy the condition). The events that $(k_1^*, k_2^*)$ exists are not mutually exclusive for distinct values of $k_1^*$ and $k_2^*$. While it is possible to compute this probability using the inclusion-exclusion formula, the computation is expensive, since the number of choices of $(k_1^*, k_2^*)$ can be large and each term in the inclusion-exclusion formula requires the computation of order statistics.

A practical approach to estimate the probability that nodes have enough supply is by simulation. In each trial of the simulation, $N_i$ nodes are randomly generated in area $A$, where $N_i$ follows a Poisson distribution of rate $\lambda_i A$ ($\forall i \in \{1,2\}$). Then, each of the $N_i$ nodes is tagged with a realization of the random variable $K_j$, which indicates the number of required supply nodes ($\forall i,j \in \{1,2\}, i \neq j$). Let $I$ indicate whether there exist $(k_1^*, k_2^*)$ such that at least $k_i^*$ nodes among the $N_i$ nodes all have tags no more than $k_j^*$ ($\forall i,j \in \{1,2\}, i \neq j$). The value of $I$ can be computed by Algorithm \ref{alg:supply}.

\begin{algorithm}[h]
\caption{An algorithm that determines whether nodes have enough supply.}
\textbf{Initialization:}\\
\hspace{3mm} Sort the $N_i$ realizations of the random variable $K_j$ in the ascending order. Let $K_j^{(t)}, t = 1,\dots, N_i$ be the sorted list ($\forall i,j \in \{1,2\}, i \neq j$). Let $t_1 = t_2 = 1$.\\
\textbf{Main loop:}
\algsetblock[]{While}{EndWhile} {}{}
    \begin{algorithmic}
\While {$I$ is not determined}
\State $t'_1 \gets K_1^{(t_2)}$, $t'_2 \gets K_2^{(t_1)}$.
\If {$t'_1 \leq t_1$ and $t'_2 \leq t_2$}
    \State $I \gets 1.$
\EndIf
\If {$t'_1 > N_1$ or $t'_2 > N_2$}
    \State $I \gets 0.$
\EndIf
\State $t_1 \gets \max(t_1, t'_1)$, $t_2 \gets \max(t_2, t'_2)$.
\EndWhile
\end{algorithmic}
\label{alg:supply}
\end{algorithm}

We now prove the correctness of Algorithm \ref{alg:supply}. For easy presentation, the $N_i$ nodes are referred to as nodes in $G_i$ ($\forall i \in \{1,2\}$). Initially, among the nodes in $G_i$, the algorithm chooses one node that needs the smallest number of supply nodes. To support this node, at least $t'_j = K_j^{(1)}$ nodes need to be in $G_j$. If $t'_1 \leq 1$ and $t'_2 \leq 1$, one node from $G_1$ and one node from $G_2$ suffice to support each other. Otherwise, if $t'_j > 1$, at least $t'_j$ nodes need to be in $G_j$. The $t'_j$ nodes must be supported by $K_i^{(t'_j)}$ nodes from $G_i$. If $K_i^{(t'_j)}$ is larger than the total number of nodes in $G_i$, then there are not enough supporting nodes in $G_i$ and $I = 0$. If $K_1^{(t'_2)} \leq t'_1$ and $K_2^{(t'_1)} \leq t'_2$, then $t'_1$ nodes from $G_1$ support $t'_2$ nodes from $G_2$, and vise versa. Note that $t'_1$ and $t'_2$ never decrease in the iterations, and at least one of them strictly increases in an iteration where $I$ is not determined. If there exists at least one pair $(k_1^*, k_2^*)$, the algorithm terminates with $I = 1$ at the smallest pair for both coordinates, which can be easily shown by contradiction. If no such pair $(k_1^*, k_2^*)$ exists, the algorithm terminates with $I = 0$.

Given $(\lambda_1, \lambda_2)$, by repeating a sufficiently large number of trials, the probability that $I = 1$ can be estimated within a small multiplicative error with high confidence using Monte Carlo simulation. As long as this probability is at least $1/2$, $\GG$ percolates with high confidence.

\color{black}
\section{Conclusion}
\label{sc:conclusion}
We developed an interdependent RGG model for interdependent spatially embedded networks. We obtained analytical upper bounds and confidence intervals for the percolation thresholds. The percolation thresholds of two interdependent RGGs form a curve, which shows the tradeoff between the two node densities in order for the interdependent RGGs to percolate. The curve can be used to study the robustness of interdependent RGGs to random failures. Moreover, if the node densities are above any upper bound on the percolation thresholds obtained in this paper, then the interdependent RGGs remain percolated after a geographical attack. Finally, we extended the techniques to models with more general interdependence. The study of percolation thresholds in this paper can be used to design robust interdependent networks.


\bibliographystyle{IEEEtran} 
\bibliography{rgg}

\begin{IEEEbiography}
[{\includegraphics[width=1in,height=1.25in,clip,keepaspectratio]{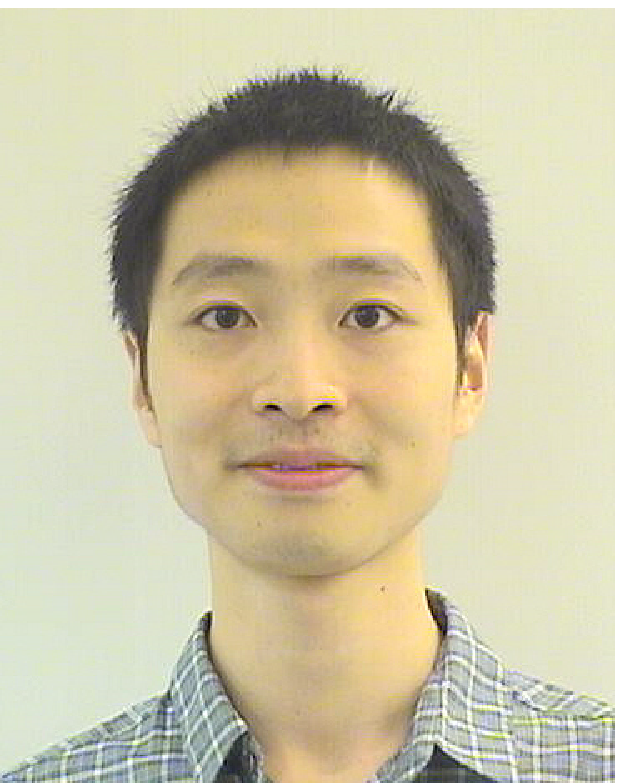}}]
{Jianan Zhang}
received his B.E. degree in Electronic Engineering from Tsinghua University, Beijing, China, in 2012, and M.S. degree from Massachusetts Institute of Technology, Cambridge, MA, USA, in 2014. He is currently pursuing the Ph.D. degree at the Laboratory for Information and Decision Systems, Massachusetts Institute of Technology. His research interests include network robustness, optimization and interdependent networks.
\end{IEEEbiography}

\begin{IEEEbiography}
[{\includegraphics[width=1in,height=1.25in,clip,keepaspectratio]{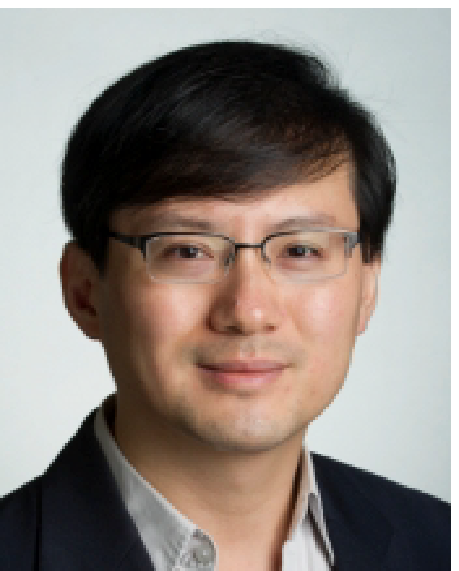}}]
{Edmund M. Yeh}
(SM'12) received his B.S. in Electrical Engineering with Distinction and Phi Beta Kappa from Stanford University in 1994.  He then studied at Cambridge University on the Winston Churchill Scholarship, obtaining his M.Phil in Engineering in 1995.  He received his Ph.D. in Electrical Engineering and Computer Science from MIT in 2001.
He is currently Professor of Electrical and Computer Engineering at Northeastern University.  He was previously Assistant and Associate Professor of Electrical Engineering, Computer Science, and Statistics at Yale University. Professor Yeh has held visiting positions at MIT, Stanford, Princeton, University of California at Berkeley, New York University, Swiss Federal Institute of Technology Lausanne (EPFL), and Technical University of Munich. He has been on the technical staff at the Mathematical Sciences Research Center, Bell Laboratories, Lucent Technologies, Signal Processing Research Department, AT$\&$T Bell Laboratories, and Space and Communications Group, Hughes Electronics Corporation.
Professor Yeh is the recipient of the Alexander von Humboldt Research Fellowship, the Army Research Office Young Investigator Award, the Winston Churchill Scholarship, the National Science Foundation and Office of Naval Research Graduate Fellowships, the Barry M. Goldwater Scholarship, the Frederick Emmons Terman Engineering Scholastic Award, and the President’s Award for Academic Excellence (Stanford University).  He received Best Paper Awards at the ACM Conference on Information Centric Networking, Berlin, September 2017, at the IEEE International Conference on Communications (ICC), London, June 2015, and at the IEEE International Conference on Ubiquitous and Future Networks (ICUFN), Phuket, July 2012.
\end{IEEEbiography}

\begin{IEEEbiography}
[{\includegraphics[width=1in,height=1.25in,clip,keepaspectratio]{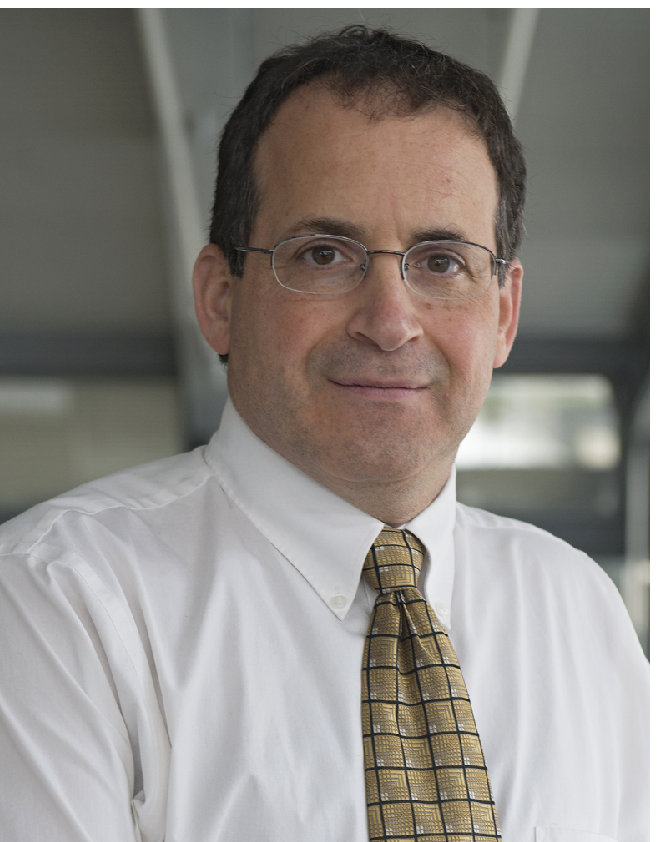}}]
{Eytan Modiano}
received his B.S. degree in Electrical Engineering and Computer Science from the University of Connecticut at Storrs in 1986 and his M.S. and PhD degrees, both in Electrical Engineering, from the University of Maryland, College Park, MD, in 1989 and 1992 respectively.  He was a Naval Research Laboratory Fellow between 1987 and 1992 and a National Research Council Post Doctoral Fellow during 1992-1993.  Between 1993 and 1999 he was with MIT Lincoln Laboratory.  Since 1999 he has been on the faculty at MIT, where he is a Professor and Associate Department Head in the Department of Aeronautics and Astronautics, and Associate Director of the Laboratory for Information and Decision Systems (LIDS).

His research is on communication networks and protocols with emphasis on satellite, wireless, and optical networks.   He is the co-recipient of the MobiHoc 2016 best paper award, the Wiopt 2013 best paper award, and the Sigmetrics 2006 Best paper award.  He is the Editor-in-Chief for IEEE/ACM Transactions on Networking, and served as Associate Editor for IEEE Transactions on Information Theory and IEEE/ACM Transactions on Networking.  He was the Technical Program co-chair for  IEEE Wiopt 2006, IEEE Infocom 2007, ACM MobiHoc 2007, and DRCN 2015.  He is a Fellow of the IEEE and an Associate Fellow of the AIAA, and served on the IEEE Fellows committee.
\end{IEEEbiography}

\end{document}